\newcommand{\rmconf}[1]{#1}
\newcommand{\addconf}[1]{}
\newtheorem{theorem}{Theorem}
\newtheorem{lemma}{Lemma}
\newtheorem{corollary}{Corollary}[theorem]
\newtheorem{corollarylemma}{Corollary}[lemma]
\newtheorem{property}{Property}
\def\R{\mathbb{R}}
\newcommand{\comment}[1]{	}
\definecolor{light-gray}{gray}{0.4}
\newcommand{\example}[3]{\noindent{\bf Example #1.} ({\it #2}) #3 \qed}
\def\R{\mathbb{R}}
\newcommand{\INDEP}{{\cal I\,}}
\newcommand{\LP}{{\cal L\,}}
\newcommand{\OPT}{{\rm OPT\,}}
\newcommand{\OO}{{\rm O\,}}
\newcommand{\Prob}{\mathbb{P}}
\newcommand{\Ex}{\mathbb{E}}
\newcommand{\beq}{\begin{equation}}
\newcommand{\eeq}{\end{equation}}
\newcommand{\E}{{\mathbb E}}
\title{Correlation Robust Stochastic Optimization}
\author{
Shipra Agrawal\thanks
 { Email: {shipra@cs.stanford.edu}. Computer Science and Engineering, Stanford University, Stanford, CA 94305, USA. Research supported in part by Boeing.
  }
  \and
Yichuan Ding\thanks
  {Email: {y7ding@stanford.edu}. Management Science and Engineering, Stanford
University, Stanford, CA 94305, USA.}
\and
Amin Saberi \thanks{Email: {saberi@stanford.edu}. Management Science and Engineering, Stanford
University, Stanford, CA 94305, USA.} \and
Yinyu Ye \thanks{Email: {yinyu-ye@stanford.edu}. Management Science and Engineering
and, by courtesy, Electrical Engineering, Stanford, CA 94305, USA. Research supported in part by Boeing.}
}
\begin{document}
\newcommand{\shcomment}[1]{\textcolor{red}{\it TODO: #1}\\}
\date{}
\maketitle

\begin{abstract}
We consider a robust model proposed by Scarf, 1958, for stochastic optimization when
only the marginal probabilities of (binary) random variables are given, and the correlation between the random variables is unknown. In the robust model, the objective is to minimize expected cost against worst
possible joint distribution with those marginals. 
We introduce the concept of {\it correlation gap} to compare this model to the stochastic optimization model that ignores correlations and minimizes expected cost under independent Bernoulli distribution. We identify a class of functions, using concepts of summable cost sharing schemes from game theory, for which the correlation gap is well-bounded and the robust model can be approximated closely by the independent distribution model. As a result, we derive efficient approximation factors for many popular cost functions, like submodular functions, facility location, and Steiner tree. As a byproduct, our analysis also yields some new results in the areas of
social welfare maximization and existence of Walrasian equilibria, which may be of independent interest.
\end{abstract}

\section{Introduction}
Stochastic optimization models decision making under uncertain or unknown
problem data.
We consider stochastic optimization problems in which the uncertain
variable is the ``demand" set. For example, in stochastic network design
problems, the random variable is the subset of source-destination
pairs to be connected; in stochastic facility location problem, the
random variable is the subset of potential clients that will have a demand; and in stochastic set cover problem, it is the subset of
elements that need to be covered. In general, such a stochastic
program can be expressed as \beq\label{SP}
\begin{array}{ll}
\min_{x \in C} \Ex[f(x,S)],
\end{array}
\eeq where $x$ is the decision variable which lies in a constrained
set $C$, and the random subset $S \subseteq V$ cannot be observed
before the decisions $x$ is made. $f(x,S)$ is the \emph{cost
function} which depends on both the decision $x$ and the outcome
scenario $S$.
The objective of stochastic programming is to minimize the
expected cost, which depends on  the joint distribution of items in
$V$. \newline
\\
In stochastic optimization, it is typically assumed that the distribution of random variable is either known or can be sampled from \cite{shapiro02, charikar05, swamy}. In this model, sample average approximation (SAA) has been used give approximation algorithms for many two-stage stochastic discrete optimization problems, including stochastic set cover \cite{swamy}, uncapacitated facility location \cite{swamy}, and Steiner tree problem \cite{boosted}. Those models are suitable when one does have access to a lot of time invariant reliable statistical information. In this paper, we study the problem when information about a part of the distribution (marginals) is known.
In the case when only marginal probabilities $p_i$ of each element are available, a common heuristic is to assume that the distribution of random set $S$ a product distribution. In other words, each
element $i$ may appear in $S$ independently with a given probability
$p_i$. For example, see
\cite{Kleinberg97allocatingbandwidth, 331530}. However, there is a
conventional wisdom that ignoring correlations can have catastrophic
consequences. Examples can be constructed such that the cost of the
solution optimized against the independent distribution performs
very poorly once certain correlations are introduced.\newline
\\
To address such problems, Scarf (1958, \cite{scarf}) proposed a
correlation-robust or distributionally-robust stochastic model,
which minimizes the expected cost over distributions having a fixed
marginal probability $p_i$ for each $i \in V$, but with any possible
correlations. For a problem instance $(f, V, \{p_i\})$, we wish to
find

\begin{equation}
\label{DRSPB}
\begin{array}{lrl}
 & \min_{x \in C} & g(x),
\end{array}
\end{equation}
where $g(x)$ is the expected cost under worst-case distribution when decision $x$ has been made, given by
\begin{equation}
\label{primal}
\begin{array}{lcl}
\max_{\cal D} &  \Ex_{\cal D}[f(x,S)] & \\
s.t. & \sum_{S: i\in S} \Prob_{\cal D}(S) =p_i. & \forall i\in V.
\end{array}
\end{equation}
We believe this is a very useful model because it takes advantage of
the stochasticity of the input, and at the same time efficiently
utilizes the available information. On the other hand, it defines an
exponential size linear program which makes the problem potentially
difficult to solve. A common strategy for such linear programs is to solve the corresponding dual LP with exponential number of constraints, using separating hyperplane approach. However, for the above model, approximating the separating hyperplane problem can be shown to be harder than the max-cut problem even for the special case when the function $f$ is submodular in $S$.\newline
\\
A natural question is how much risk it involves to simply ignore the correlations and minimize the expected cost of independent distribution instead of the worst case distribution. Or, in other words, how well the stochastic optimization model with independent distribution approximates the correlation robust model. The focus of this paper is to study this \emph{correlation gap}. For a particular problem instance $(f,V,\{p_i\})$ and a decision $x$, we define the correlation gap as the ratio between the expected cost $\Ex[f(x,S)]$ under the worst case distribution and that under the independent distribution on $S$. 
Correlation gap has many interesting implications for stochastic optimization problems.
A small upper bound on correlation gap allows relaxation of the stochastic optimization problem under any distribution, including the worst case distribution model \eqref{DRSPB}, to the product distribution case which is often more efficient to solve either by sampling or by other algorithmic techniques \cite{Kleinberg97allocatingbandwidth, 331530}. Further, in many real data collection scenarios, practical constraints can make it very difficult (or costly) to learn the complete information about correlations in data. In those cases, the correlation gap can provide a guideline to decide how important it is to spend resources on learning these correlations. In other words, it measures the ``value of correlations" in the statistical data.
Our main result is to characterize a wide class of functions for which the correlation gap can be well bounded. We also provide counter-examples showing large correlation gap for various other classes of functions.\newline
\\
Below, we summarize our key results:

\begin{itemize}
\item {\it A class of functions with bounded correlation gap:} For functions $f(x,S)$ that are non-decreasing in $S$ and have a cross-monotone, $\beta$-budget balance, (weak) $\eta$-summable cost-sharing scheme,  we show that the correlation gap is upper bounded by $\eta\beta \frac{e}{e-1}$. This will give correlation gap bounds (and matching approximation factors for robust model) of  $e/(e-1)$ for submodular functions, $\OO(\log{n})$ for facility location, and $\OO(\log^2 n)$ for Steiner forest, where $n=|V|$, the size of ground set.

\item {\it Hardness results:} We show examples with correlation gap of $\Omega(2^n)$ for functions supermodular in $S$, $\Omega(\sqrt{n}\log\log{n}/\log{n})$ for monotone subadditive functions in $S$, and $e/(e-1)$ for submodular functions. These examples will also prove corresponding lower bounds on approximation factors that can be achieved by substituting independent distribution for the robust model.  

\item{\it Polynomial-time algorithm for supermodular functions:} 
We analytically characterize the worst case distribution when function $f(x,S)$ is supermodular in $S$, and consequently give a polynomial-time algorithm for the correlation robust model provided $f$ is convex in $x$.

\item {\it New results for welfare maximization problems:} 
As a byproduct, our result provides a $\frac{1}{\eta\beta}(1-1/e)$-approximation algorithm for the well-studied problem of social welfare maximization in combinatorial auctions, when the utility functions are identical and admit $(\eta, \beta)$-cost-sharing scheme. Notably, this implies $(1-1/e)$-approximation for {\it identical} submodular utility functions, matching the best approximation factor (Vondrak, 2008 \cite{vondrak08}) for this case.

We also provide a simple counterexample for the conjecture by Bikhchandani \cite{bikhchandani97} that markets that have buyers with identical submodular utilities admit a Walrasian price equilibria.
\end{itemize}
The rest of the paper is organized as follows. To begin,
Section 2 will provide a mathematical definition of correlation gap, and
examples showing large correlation gap for certain classes of cost functions.
In Section 3, we present our main technical theorem that upper bounds the correlation gap for 
a wide class of cost functions, and discuss its implications on various stochastic optimization problems and the welfare maximization problem. The proof of this theorem is presented in Section 4. Finally, in Section 5, we end with a direct solution of correlation robust model for supermodular functions.
\section{Correlation Gap}
\label{sec:cor}
For a problem instance $(f,V,\{p_i\})$ and at a given decision $x$, we define correlation gap as the ratio $\kappa$ between the expected cost of the worst case distribution and that of the independent distribution,
i.e.,
\begin{equation}
\label{eq:cordef}
\kappa := \frac{\Ex_{{\cal D}^R}[f(x,S)]}{\Ex_{{\cal D}^I} [f(x,S)]},
\end{equation}
where ${\cal D}^I$ is the independent Bernoulli distribution (also called product distribution) with marginals $\{p_i\}$, and ${\cal D}^R$ is the worst-case distribution (as given by \eqref{primal}). \newline
\\
Suppose that for some particular cost function $f$, the correlation gap can be upper bounded above by $\overline \kappa$ {\it for all $x$}, then it is not difficult to show that the decision obtained assuming independent distribution will give a ${\overline \kappa}$-approximate solution to the corresponding robust optimization problem. More precisely, let $x_I$ is the optimal solution to the stochastic optimization problem \eqref{SP} with independent Bernoulli distribution, and $x_R$ is the optimal solution to the correlation robust problem \eqref{DRSPB}. Then,
$$
\begin{array}{lcl}
g(x_I) & = & \Ex_{{\cal D}^R}[f(x_I,S)], \ \ {\rm and} \\
\noalign{\medskip}
g(x_R) & = & \Ex_{{\cal D}^R}[f(x_R,S)] \ge \Ex_{{\cal D}^I}[f(x_R,S)] \\
& \ge & \Ex_{{\cal D}^I}[f(x_I,S)]
\end{array}
$$
Using the bound on correlation gap at $x_I$, this implies
$$g(x_I) \le \overline \kappa \ g(x_R)$$
Unfortunately, for general cost functions, the correlation gap and hence the corresponding approximation factor can be large in order of $n$, as demonstrated by the following examples.\newline
\\
\example{1}{Minimum cost flow: $\Omega(2^n)$ correlation gap for supermodular functions}
{\\(Sketch)
Consider a two-stage minimum cost flow problem as in Figure \ref{fig:minflow}. There is a single source $s$, and $n$ sinks $t_1,t_2,\ldots, t_n$. Each sink $t_i$ has a probability $p_i=\frac{1}{2}$ to request a demand, and then a unit flow has to be sent from $s$ to $t_i$. Each arc $(u, t_i)$ has a fixed capacity $1$, but the the capacity of arc $(s,u)$ needs to be purchased at a cost $c^I(x)$ in the first stage, and a higher cost $c^{II}(x)$ in the second stage after the set of demand requests is revealed. $c^I(x)$, $c^{II}(x)$ are given as
$$
c^I(x)=\left\{\begin{array}{ll}
x, & x \leq n-1\\
n+2, & x = n \end{array}
\right. ~~~~\ \ \ \ \ \  \ c^{II}(x)=2^n x.
$$
\begin{figure}\label{fig:minflow}
\begin{center}
 \vspace{-0.4in}
 \includegraphics[width=0.38\textwidth,height=0.28\textwidth]{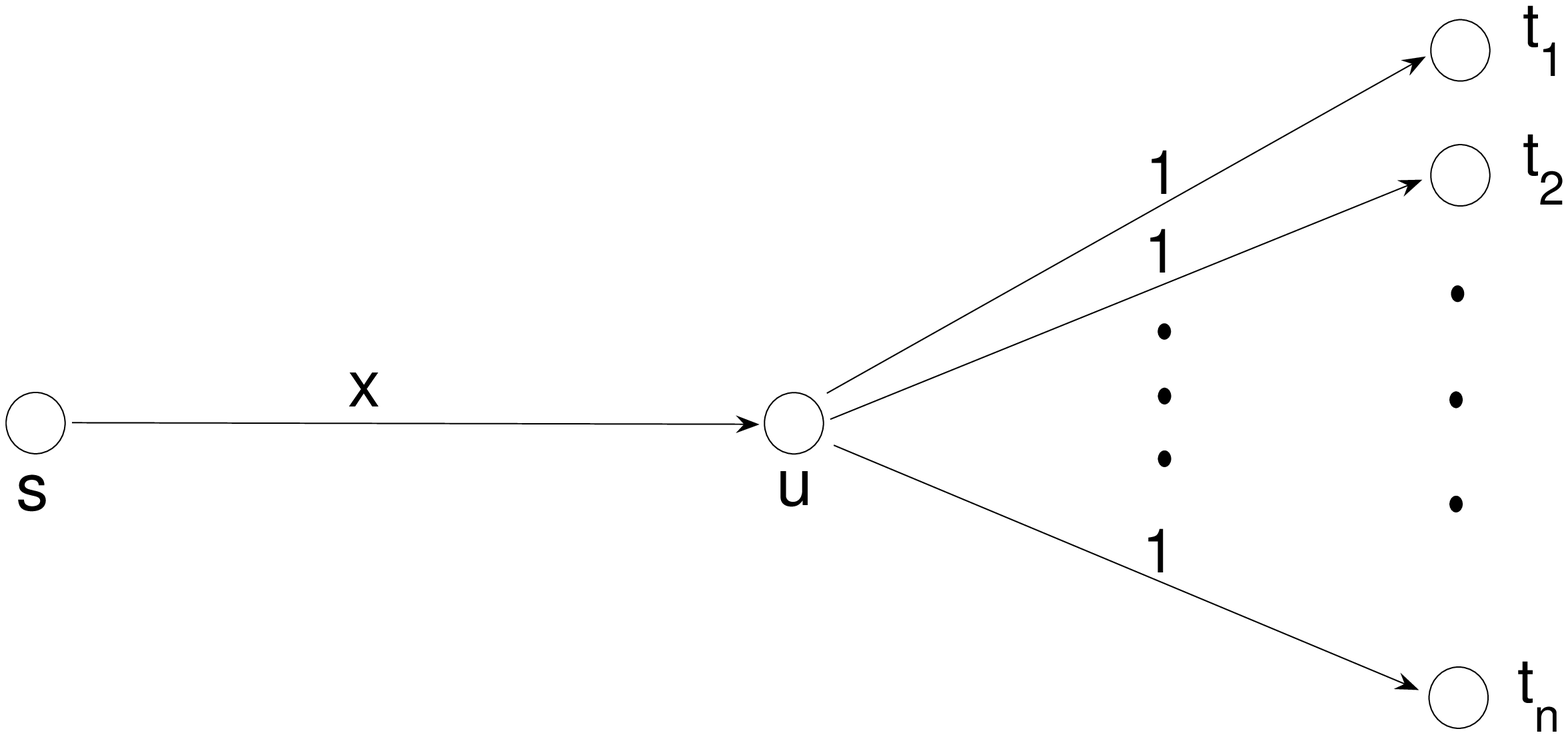}
  \vspace{-0.4in}
  \caption{An example with exponential correlation gap}
\end{center}
\end{figure}

Given the first stage decision $x$, the cost of edges that need to be bought in the second stage to serve a set $S$ of requests is given by: $f(x,S) =c^I(x)+c^{II}(|S|-x)^+ = c^I(x) + 2^n (|S|-x)^+$. It is easy to check that $f(x,S)$ is supermodular in $S$ for any given $x$, i.e. $f(x, S\cup i) - f(x, S) \ge f(x, T \cup i) - f(x, T)$ for any $S \supseteq T$. The objective is to minimize the total expected cost $c^I(x)+\Ex[f(x,S)]$. If the decision maker assumes independent demands from the sinks, then $x_I=n-1$ minimizes the expected cost, and the expected cost is $n$; however, for the worst case distribution the expected cost of this decision will be $g(x_I)=2^{n-1}+n-1$ (when $\Pr(V)=\Pr(\emptyset)=1/2$ and all other scenario have zero probability). 
Hence, the correlation gap at $x_I$ is exponentially high. A risk-averse strategy  is to use the robust solution $x_R=n$, which leads to a cost $g(x_R)=n+1$. Thus, approximation ratio $g(x_I)/g(x_R)=\Omega(2^n)$.} 
\newline
\\
\example{2}{Stochastic set cover: $\Omega(\sqrt{n}\frac{\log\log{n}}{\log{n}})$ correlation gap for subadditive functions}{\\(Sketch)
Consider a set cover problem with elements $V=\{1, \ldots, n\}$. Each item $j \in V$ has a marginal probability of $1/K$ to appear in the random set $S$. The covering sets are defined as follows. Consider a partition of $V$ into $K=\sqrt{n}$ sets $A_1, \ldots, A_K$ each containing $K$ elements. The covering sets are all the sets in the cartesian product $A_1\times \cdots \times A_K$. Each set has unit cost. Then, cost of covering a set $S$ is given by subadditive function
\begin{equation*}
c(S)=\max_{i=1, \ldots, K}  |S\cap A_i| \ \ \ \forall S \subseteq V.
\end{equation*}
The worst case distribution with marginal probabilities $p_i=1/K$ is one where probabilities $\Pr(S)=1/K$ for $S=A_i$, $i=1,2,\ldots,K$, and $\Pr(S)=0$  otherwise. The expected value of $c(S)$ under this distribution is $K=\sqrt{n}$.
For independent distribution, $c(S)=\max_{i=1,\ldots,K} {\zeta_i}$, where $\zeta_i=|S \cap A_i|$ are independent $(K, 1/K)$-binomially distributed random variables. 

As $K$ approaches $\infty$, since expected value of remains fixed at $1$, the Binomial($K$, $1/K$) distribution approaches the Poisson distribution with expected value $1$. Using some known results on maxima of independent poisson random variables in \cite{kimber83}, it can be shown that for large $K$, the expected value of the maximum of $K$ i.i.d. poisson random variables is bounded by $\Theta(\log K/\log\log K)$ (refer to \rmconf{Appendix \ref{app:poisson}} \addconf{\cite{tech-report}} for a detailed proof). This implies that $\Ex [\max_{i=1, \ldots, \sqrt{n}} \{\zeta_i\}]$ is bounded by $\Theta(\log n/\log\log n)$ for large $n$. So the correlation gap is atleast $\Omega({\sqrt{n}}\log\log n/\log{n})$.

To obtain approximation lower bound for two-stage stochastic set cover instance, extend the above instance as follows. For ease of notation, let $L(n) = d\log{n}/\log\log{n}$, where $d$ is a constant such that $\Ex [\max_{i} \{\zeta_i\}] \le L(n)$ . Let the first stage cost of a covering set to be $w^I=(1+\epsilon)L(n)/\sqrt{n}$ for some small $\epsilon>0$, and the second stage cost to be $w^{II}=1$. For a given first stage cover $x$, let $B(x)$ be the set of elements covered by $x$, then $f(x,S)= w^I|x|+c(S-B(x))$. Using above analysis for function $c(S)$, the optimal solution for independent distribution will be to buy no (or very few) sets in the first stage giving $\Ex[f(x,S)] \le L(n)$ for independent distribution, but $\Theta(\sqrt{n})$ cost for worst case distribution. On the other hand, the optimal robust solution considering worst case distribution is to cover all the elements in the first stage giving $\OO(L(n))$ cost in the worst case. Thus, approximation ratio $g(x_I)/g(x_R)=\Omega(\sqrt{n}\log\log{n}/\log{n})$.
}\newline



These examples indicate that using independent distribution may not always give a good approximation to the robust model. However, below we identify a wide class of functions for which correlations may be ignored to get efficient solutions for stochastic optimization problems.

\section{A class of functions with low correlation gap}
A key contribution of our paper is to identify a class of cost functions for which the correlation gap is well bounded. 
To our interest, many popular cost functions including submodular functions, facility location, Steiner forest, etc. belong to this class, which will lead to efficient approximations for these problems. \newline
\\
We derive our characterization using concepts of cost-sharing. A cost-sharing scheme is a function defining how to share the cost of a service among the serviced customers. We consider the class of cost functions $f$ such that for every feasible $x$, there exists some cost-sharing scheme for allocating the cost $f(x,S)$ among members of set $S$ with (a) \emph{$\beta$-budget balance} (b) weak \emph{cross-monotonicity}, and (c) weak \emph{$\eta$-summability}. Below we precisely state these properties.
Since we assume that $x$ can take any fixed value, we will abbreviate $f(x,S)$ as $f(S)$ for simplicity when clear from the context.\newline
\\
 A cost-sharing scheme is {\it cross-monotonic} if it satisfies the property that everyone is better off when the set of people who receive the service expands \cite{agt}. Roughgarden et al \cite{roughgarden} introduced an additional property of {\it summability} for cost-sharing schemes. 
Here, we will define a slightly weaker version of these properties by requiring them to hold only for given ordering on a subset of $V$.  More precisely, we define a cost-sharing scheme as a function $\chi(i, S, \sigma_S)$ that, for each element $i\in S$ and ordering $\sigma_S$ on $S$, specifies the share of $i$ in $S$. 
The three properties of budget-balance, weak cross-monotonicity and weak summability are now stated as follows:
\begin{enumerate}
\item{\it $\beta$-budget balance:} For all $S$, and orderings $\sigma_S$ on $S$:
$$f(S) \ge \sum_{i=1}^{|S|} \chi(i,S, \sigma_S) \ge \frac{f(S)}{\beta}
$$
\item{\it Cross-monotonicity:} For all $i\in S$, $S\subseteq T$, $\sigma_S \subseteq \sigma_T$: 
$$\chi(i,S, \sigma_S) \ge \chi(i,T, \sigma_T)$$
 Here , $\sigma_S \subseteq \sigma_T$ means that the ordering $\sigma_S$ is a restriction of ordering $\sigma_T$ to subset $S$.
\item{\it Weak $\eta$-summability:} For all $S$, and orderings $\sigma_S$:
$$\sum_{\ell=1}^{|S|} \chi(i_{\ell}, S_{\ell}, \sigma_{S_{\ell}}) \le \eta f(S)
$$
where $i_{\ell}$ is the $\ell^{th}$ element and $S_{\ell}$ is the set of the first $\ell$ members of $S$ according to ordering $\sigma_S$. And, $\sigma_{S_{\ell}}$ is the restriction of $\sigma_S$ on $S_{\ell}$.
Note that this is a weaker requirement than the conventional definition of summability, where a single cost-sharing function $\chi(i, S)$ must satisfy the given inequality for {\it all} orderings on the ground set \cite{roughgarden}.
\end{enumerate}
We may re-emphasize that any cost-sharing scheme satisfying the conventional definition of $\beta$-budget-balance, cross-monotonicity and $\eta$-summability (as in \cite{agt, roughgarden}) will always satisfy the above weaker conditions. However, this relaxation to weak conditions could give significant savings in approximation factors for some cases. For example, submodular functions satisfy the above weak conditions with $\eta=1$ and $\beta=1$ for the incremental cost-sharing scheme:
$$ \chi(i,S, \sigma_S) = f(S_{i}) - f(S_{i-1})$$
where $S_{i}$ is the set of the first $i$ members of $S$ according to ordering $\sigma_S$.
On the other hand, for the conventional definition of summability, a lower bound of $\eta\ge \Omega(\log n)$ was shown for submodular functions in \cite{roughgarden}.\newline
\\
Let us call a cost-sharing scheme satisfying the above three properties an $(\eta,\beta)$-cost-sharing scheme. Also, we say that a function $f(x,S)$ is non-decreasing in $S$ if for every $x$ and every $S \subseteq T$, $f(x,S) \le f(x,T)$. 
Our main result is the following theorem, which we will prove in the next section:

\begin{theorem}\label{th:MAIN}
For any instance $(f,V,\{p_i\})$, if for all feasible $x$, the cost function $f(x,S)$ is non-decreasing in $S$ and has an $(\eta, \beta)$-cost-sharing scheme for elements in $S$, then the correlation gap is bounded as $\eta \beta\left(\frac{e}{e-1}\right)$.
\end{theorem}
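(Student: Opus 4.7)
The plan is to prove the bound by LP duality, using the $(\eta,\beta)$-cost-sharing scheme to construct a dual-feasible vector whose objective tracks the independent-distribution expectation. Fix the decision $x$ and view $f$ as a set function $f(S)$. The maximization \eqref{primal} is a linear program in $\{\Prob_{\cal D}(S)\}_{S\subseteq V}$, and strong duality gives
\begin{equation*}
g(x) \;=\; \min_{y\in\R^V}\Bigl\{\sum_{i\in V} p_i y_i \,:\, \sum_{i\in S} y_i \ge f(x,S),\; \forall\, S\ne\emptyset\Bigr\}.
\end{equation*}
Hence it suffices to exhibit a dual-feasible $y$ with $\sum_i p_i y_i \le \eta\beta\cdot\frac{e}{e-1}\cdot\Ex_{{\cal D}^I}[f(x,S)]$.

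I would construct $y$ from the cost-sharing scheme via a random-order argument. Let $\sigma$ be a uniformly random permutation of $V$, let $V_i(\sigma)$ denote the elements appearing at or before $i$ in $\sigma$, and let $T\subseteq V$ be an independent Bernoulli sample with marginals $\{p_j\}$, drawn independently of $\sigma$. Define
\begin{equation*}
y_i \;=\; \tfrac{e}{e-1}\,\beta\;\Ex_{\sigma,T}\bigl[\chi\bigl(i,\,(T\cap V_i(\sigma))\cup\{i\},\,\sigma\bigr)\bigr].
\end{equation*}
Dual feasibility at a given $S$ should follow in two steps: $\beta$-budget balance gives $f(x,S)\le \beta\sum_{i\in S}\chi(i,S,\sigma_S)$ for any ordering $\sigma_S$; cross-monotonicity then upgrades $\chi(i,S,\sigma_S)$ to $\chi(i,(T\cap V_i(\sigma))\cup\{i\},\sigma)$ on the event that the latter set is contained in $S$, and that event has probability at least $1-1/e$ under a standard thresholding/Poissonization calculation. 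This is precisely where the $\tfrac{e}{e-1}$ prefactor in the definition of $y_i$ enters.

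For the upper bound on $\sum_i p_i y_i$, I would apply weak $\eta$-summability to an independent sample $S\sim{\cal D}^I$ drawn together with $\sigma$:
\begin{equation*}
\Ex_{\sigma, S}\Bigl[\sum_{i\in S}\chi\bigl(i,\,S\cap V_i(\sigma),\,\sigma_{S\cap V_i(\sigma)}\bigr)\Bigr] \;\le\; \eta\,\Ex_{{\cal D}^I}[f(x,S)].
\end{equation*}
Conditional on $i\in S$, the set $S\cap V_i(\sigma)$ has the same distribution as $(T\cap V_i(\sigma))\cup\{i\}$ from the definition of $y_i$, so the left-hand side equals $\tfrac{e-1}{e\beta}\sum_i p_i y_i$; rearranging closes the argument via duality.

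The main obstacle is calibrating the random construction so that both the cross-monotonicity-based feasibility argument and the summability-based objective bound succeed with matching constants. Pinning down the $\tfrac{e}{e-1}$ factor in particular requires essentially the same Poisson-like analysis that underlies the classical $(1-1/e)$ correlation gap for monotone submodular functions, with the cost-sharing framework supplying the combinatorial structure needed to extend the argument to the broader class covered by the theorem.
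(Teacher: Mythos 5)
Your approach via LP duality is genuinely different from the paper's: the paper never dualizes the worst-case LP in the proof of Theorem~\ref{th:MAIN}. Instead it first applies a ``split'' operation (duplicating elements) to reduce any instance to one where all marginals equal $1/K$ and the worst-case distribution is a $K$-partition $\{A_1,\dots,A_K\}$, each occurring with probability $1/K$; then it fixes an ordering $\sigma$ in which $A_K$ precedes $A_{K-1}$ precedes $\dots$ precedes $A_1$, and runs a recursion $\phi(V)\ge (1-1/K)\phi(V\setminus A_1)+\frac{1}{\beta K}f(A_1)$ whose unrolling produces the geometric sum $\frac{1}{K}\sum_{k=1}^K(1-1/K)^{k-1}\ge 1-1/e$. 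The $1-1/e$ thus comes from the partition structure of the worst-case distribution, not from a random-order argument.

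Unfortunately your dual construction has a genuine gap in the feasibility step. You claim that $\Pr_{\sigma,T}\bigl[(T\cap V_i(\sigma))\cup\{i\}\subseteq S\bigr]\ge 1-1/e$. This is false: the event requires that no $j\in V\setminus S$ lands both in $T$ and before $i$ in $\sigma$, so the probability is roughly $\prod_{j\in V\setminus S}(1-p_j/2)$, which can be made arbitrarily small. Concretely, with $p_j=1/2$ for all $j$ and $S=\{i\}$, the probability is about $(3/4)^{n-1}\to 0$, so $y_i$ fails the constraint $y_i\ge f(\{i\})$ by a factor that grows with $n$ rather than being off by at most $\tfrac{e}{e-1}$. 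There is a second, more structural difficulty: the indicator of your ``good event'' and the cost share $\chi(i,S,\sigma_S)$ are coupled through the shared randomness $\sigma$, so even a uniform lower bound on the event probability would not let you pull it out in front of the budget-balance inequality $\sum_{i\in S}\chi(i,S,\sigma_S)\ge f(S)/\beta$ without an additional correlation argument. Finally (a smaller issue), the dual of \eqref{primal} also carries a free scalar $\gamma$ multiplying the constraint $\sum_S\Prob_{\cal D}(S)=1$; your objective should be $\gamma+\sum_i p_i y_i$ with the constraint $\gamma+\sum_{i\in S}y_i\ge f(S)$, and $\gamma$ cannot always be taken to be $0$ (e.g., when all $p_i$ are near $1$ the optimal dual has $y\equiv 0$ and $\gamma=f(V)$). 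Your summability-based bound on $\sum_i p_i y_i$ is fine; it is the feasibility half that does not close, and fixing it seems to require exactly the reduction to the equal-marginal partition case that the paper performs via splitting.
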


As described in Section \ref{sec:cor}, this gives following corollary for approximating the correlation robust model:
\begin{corollary}
\label{cor:MAIN}
For instances $(f,V,\{p_i\})$ as defined in Theorem \ref{th:MAIN}, an $\eta\beta \frac{e}{e-1}$ approximate solution for correlation robust optimization problem can be constructed by solving the corresponding stochastic optimization problem under independent distribution.
\end{corollary}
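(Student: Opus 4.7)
The plan is to derive the corollary as an immediate consequence of Theorem~\ref{th:MAIN} by chaining three inequalities, exactly in the spirit of the sandwich already sketched at the end of Section~\ref{sec:cor}. Concretely, let $x_I$ minimize $\Ex_{{\cal D}^I}[f(x,S)]$ over $x \in C$ and let $x_R$ minimize the robust cost $g(x) = \max_{\cal D} \Ex_{\cal D}[f(x,S)]$ over $x \in C$. I want to show $g(x_I) \le \eta\beta\frac{e}{e-1}\, g(x_R)$, which is precisely the statement that $x_I$ is an $\eta\beta\frac{e}{e-1}$-approximate solution for the robust model.

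The first step is to invoke Theorem~\ref{th:MAIN} \emph{at the specific decision} $x=x_I$. Since $f(x_I,\cdot)$ is nondecreasing and admits an $(\eta,\beta)$-cost-sharing scheme by hypothesis, the correlation gap bound gives
\[
g(x_I) \;=\; \Ex_{{\cal D}^R}[f(x_I,S)] \;\le\; \eta\beta\frac{e}{e-1}\,\Ex_{{\cal D}^I}[f(x_I,S)],
\]
where ${\cal D}^R$ denotes the worst-case distribution achieving the maximum in \eqref{primal} at $x_I$. The second step uses the defining optimality of $x_I$ under the product distribution: $\Ex_{{\cal D}^I}[f(x_I,S)] \le \Ex_{{\cal D}^I}[f(x_R,S)]$. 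The third step observes that ${\cal D}^I$ is itself a feasible distribution in \eqref{primal} (it has the prescribed marginals $\{p_i\}$), so $\Ex_{{\cal D}^I}[f(x_R,S)] \le \max_{\cal D}\Ex_{\cal D}[f(x_R,S)] = g(x_R)$. Concatenating these three inequalities yields the desired bound.

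There is essentially no obstacle here, since all the analytical work is done by Theorem~\ref{th:MAIN}; the remaining content is just the standard ``optimize-on-a-relaxation, pay for the gap'' template. The only subtlety worth flagging is that the correlation gap bound must be applied pointwise at the independent-optimum $x_I$ rather than at $x_R$, because we do not control ${\cal D}^R$ simultaneously for both decisions. As long as Theorem~\ref{th:MAIN} holds for \emph{every} feasible $x$ (which is the quantification used in its statement), this presents no issue, and the proof of the corollary is a two-line derivation.
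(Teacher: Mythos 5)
Your proposal is correct and follows essentially the same route as the paper: the chain $g(x_I) \le \eta\beta\frac{e}{e-1}\,\Ex_{{\cal D}^I}[f(x_I,S)] \le \eta\beta\frac{e}{e-1}\,\Ex_{{\cal D}^I}[f(x_R,S)] \le \eta\beta\frac{e}{e-1}\,g(x_R)$ is exactly the sandwich argument the paper gives in Section~\ref{sec:cor}, with the correlation gap bound applied pointwise at $x_I$. Nothing is missing.
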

Further, it is easy to show that for these functions, the variance under independent distribution is bounded by $O(\frac{\eta^2\beta^2}{\bar{p}^2})$, where $\bar{p}=\min_i \{p_i\}$. Thus, if the cost function is convex in $x$, these stochastic optimization problems may be solved efficiently using sample average approximation (SAA) method \cite{shapiro02}. For specific problems, the structural simplicity provided by independent distribution may even eliminate the need of using sample average approximation.\newline
\\
Before moving on to the proof of Theorem \ref{th:MAIN}, let us briefly discuss its  implications for various stochastic optimization problems, and 
for a seemingly unrelated problem of welfare maximization in combinatorial auctions:


\subsection {Stochastic optimization with submodular functions}
A function $h:2^V\rightarrow \R$ is submodular if $h(S\cup i) - h(S) \le h(T \cup i) - h(T)$ for all $S\supseteq T$, and $i\in V$. These cost functions are characterized by diminishing marginal costs, which is common for resource allocation problems where a resource can be shared by multiple users and thereby the marginal cost decreases as number of users increases. As discussed earlier, for submodular functions $\eta=1,\beta=1$.
Therefore, Theorem \ref{th:MAIN} directly leads to the following corollary:

\begin{corollary}
\label{th:subm-approx}
If the cost function $f(x,S)$ is non-decreasing and submodular in $S$ for all feasible $x$, then for any instance $(f,V,\{p_i\})$,
the correlation gap is bounded by the constant $\frac{e}{e-1}$.
\end{corollary}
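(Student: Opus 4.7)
The plan is to reduce Corollary~\ref{th:subm-approx} to a direct application of Theorem~\ref{th:MAIN} by exhibiting an $(\eta,\beta)$-cost-sharing scheme with $\eta = \beta = 1$ for every non-decreasing submodular $f(x,\cdot)$. As flagged in the discussion preceding the corollary, the natural candidate is the \emph{incremental} (marginal) scheme
\[
\chi(i_\ell, S, \sigma_S) \;=\; f(S_\ell) - f(S_{\ell-1}),
\]
where $S_\ell$ is the set of the first $\ell$ elements of $S$ under ordering $\sigma_S$ and $S_0 = \emptyset$. Once this scheme is shown to fulfil the three properties preceding Theorem~\ref{th:MAIN}, the bound $\tfrac{e}{e-1}$ follows mechanically, since $\eta\beta\tfrac{e}{e-1} = \tfrac{e}{e-1}$.

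The first two properties will be essentially telescoping identities. For $1$-budget balance, I would observe that $\sum_{\ell=1}^{|S|} \chi(i_\ell, S, \sigma_S) = f(S) - f(\emptyset)$, which equals $f(S)$ after the harmless normalization $f(\emptyset) = 0$; monotonicity of $f$ guarantees the inequality $\le f(S)$ even without normalization. For weak $1$-summability, the sum $\sum_{\ell=1}^{|S|} \chi(i_\ell, S_\ell, \sigma_{S_\ell})$ is also a telescope equal to $f(S) - f(\emptyset) \le f(S)$, because the $\ell$-th summand depends only on the prefix $S_\ell$ and is computed identically whether one views it inside $S_\ell$ or inside $S$.

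The main obstacle, and the only place where submodularity is actually invoked, will be cross-monotonicity. Given $S \subseteq T$ and $\sigma_S = \sigma_T|_S$, an element $i \in S$ has as its $\sigma_S$-predecessors some set $P_S(i) \subseteq S$ and its $\sigma_T$-predecessors a set $P_T(i) \supseteq P_S(i)$, since restricting an ordering can only drop predecessors. Submodularity then yields
\[
f(P_S(i)\cup\{i\}) - f(P_S(i)) \;\ge\; f(P_T(i)\cup\{i\}) - f(P_T(i)),
\]
which is exactly $\chi(i,S,\sigma_S) \ge \chi(i,T,\sigma_T)$ in the weak (ordered) form demanded by Theorem~\ref{th:MAIN}. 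Assembling the three verifications produces a $(1,1)$-cost-sharing scheme, and Theorem~\ref{th:MAIN} then closes the argument with correlation gap at most $\tfrac{e}{e-1}$.
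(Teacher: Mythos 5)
Your proposal is correct and follows essentially the same route as the paper: both apply Theorem~\ref{th:MAIN} via the incremental cost-sharing scheme $\chi(i_\ell,S,\sigma_S)=f(S_\ell)-f(S_{\ell-1})$, which the paper asserts (in the paragraph preceding Theorem~\ref{th:MAIN}) to be a $(1,1)$-cost-sharing scheme for monotone submodular $f$. You have simply spelled out the telescoping and predecessor-set arguments that the paper leaves implicit; note only that the normalization $f(\emptyset)=0$ is needed for the \emph{lower} half of $1$-budget balance (so that the telescope equals $f(S)$ rather than $f(S)-f(\emptyset)<f(S)$), not the upper half as your parenthetical suggests.
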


The next example shows the $e/(e-1)$ bound is tight for submodular functions.\newline
\\
\example{3}{Tightness}{
Let $V:=\{1,2,\ldots,n\}$, define $f(S)=1$ if $S \ne \emptyset$, and $f(\emptyset)=0$. Let each item has a probability $p=\frac{1}{n}$. Then the worst case distribution is $Pr({\{i\}})=1/n$ for each $i \in V$, with expected value $1$. The independent distribution has an expected cost $1-(1-\frac{1}{n})^n \rightarrow 1-1/e$ as $n \rightarrow \infty$.
}




\subsection{Stochastic Uncapacitated Facility Location (SUFL)} In two-stage stochastic facility location problem, any facility $j\in F$ can be bought at a low cost $w^I_j$ in the first stage, and higher cost $w^{II}_j > w^I_j$ in the second stage, that is, after the random set $S \subseteq V$ of cities to be served is revealed. The decision maker's problem is to decide $x \in \{0,1\}^{|F|}$, the facilities to be build in the first stage so that the total expected cost
$\Ex[f(x,S)]$ of facility location is minimized (refer to \cite{swamy} for further details on the problem definition).

Given a first stage decision $x$, the cost function $f(x,S)=w^I\cdot x+c(x,S)$, where $c(x,S)$ is the cost of deterministic UFL for set $S \subseteq V$ of customers and set $F$ of facilities such that the facilities $x$ already bought in first stage are available freely at no cost, while any other facility $j$ costs $w^{II}_j$. 
For this deterministic UFL cost function there exists a cross-monotonic, $3$-budget balanced, $\log |S|$ summable cost-sharing scheme \cite{roughgarden07}. Therefore, using Theorem 1, we get following bound on correlation gap:

\begin{corollary}
The correlation gap for Stochastic uncapacitated facility location is bounded by $O(\log n)$, where $n=|V|$, the number of cities to be served.
\end{corollary}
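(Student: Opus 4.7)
The plan is to apply Theorem \ref{th:MAIN} to the deterministic UFL cost $c(x,S)$ appearing in the decomposition $f(x,S) = w^I\cdot x + c(x,S)$, and then transfer the resulting bound to $f$ using that the first term is $S$-independent and non-negative. I will proceed in three steps.

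First, I verify the hypotheses of Theorem \ref{th:MAIN} for $c(x,S)$ alone. Monotonicity in $S$ is immediate, since adding a client can only increase the optimal UFL cost (the previous assignment remains feasible). For the cost-sharing scheme, I will invoke the cross-monotonic, $3$-budget balanced, $O(\log|S|)$-summable cost-sharing scheme for UFL cited in \cite{roughgarden07}. Since the notion of weak $\eta$-summability defined in the excerpt only asks that the summability inequality hold for a single ordering per set rather than every ordering on the ground set, this scheme satisfies it a fortiori. Hence $c(x,S)$ admits an $(\eta,\beta)$-cost-sharing scheme with $\eta = O(\log n)$ and $\beta = 3$.

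Second, Theorem \ref{th:MAIN} applied to $c(x,S)$ yields $\Ex_{{\cal D}^R}[c(x,S)] \le O(\log n)\cdot \Ex_{{\cal D}^I}[c(x,S)]$ at every feasible $x$.

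Third, I transfer the bound from $c$ to $f$. By linearity of expectation, each of $\Ex_{{\cal D}^R}[f(x,S)]$ and $\Ex_{{\cal D}^I}[f(x,S)]$ equals $w^I\cdot x$ plus the corresponding expectation of $c(x,S)$. The independent product distribution is feasible in \eqref{primal}, so $\Ex_{{\cal D}^R}[c]\ge \Ex_{{\cal D}^I}[c]\ge 0$, and adding the same non-negative constant $w^I\cdot x$ to numerator and denominator of a ratio $\ge 1$ only pushes it closer to $1$. Therefore the correlation gap of $f$ is at most that of $c$, giving the claimed $O(\log n)$ bound. The only substantive ingredient is the cited cost-sharing scheme for UFL; the remaining obstacle is merely bookkeeping, namely matching its conventional guarantees to the weaker conditions of Theorem \ref{th:MAIN} and checking that the $S$-independent first-stage term does not hurt the ratio.
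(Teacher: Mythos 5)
Your proof is correct and follows the same route as the paper: the core of the argument is to invoke Theorem~\ref{th:MAIN} via the cross-monotonic, $3$-budget-balanced, $O(\log n)$-summable cost-sharing scheme for deterministic UFL from \cite{roughgarden07}, noting that the conventional guarantees automatically imply the weaker ones in the theorem's hypotheses. Your third step --- explicitly observing that the $S$-independent first-stage term $w^I\cdot x$ only shrinks the ratio $\Ex_{{\cal D}^R}[f]/\Ex_{{\cal D}^I}[f]$ relative to that of $c$ --- is a clean way of making rigorous a point the paper glosses over when it passes from a cost-sharing scheme for $c(x,S)$ to a correlation gap bound for $f(x,S)$.
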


This observation reduces our robust facility location problem to the well-studied stochastic UFL problem under known (independent Bernoulli) distribution \cite{swamy} at the expense of an $O(\log n)$ approximation factor.

\subsection{Stochastic Steiner Tree (SST)}
In the two-stage stochastic Steiner tree problem, we are given a graph $G = (V,E)$. An edge $e\in E$ can be bought at cost $w^I_e$ in the first stage. The random set $S$ of terminals to be connected are revealed in the second stage. More edges may be bought at a higher cost $w^{II}_e, e\in E$ in the second stage after observing the actual set of terminals. Here,  decision variable $x$ is the edges to be bought in the first stage, and cost function $f(x,S)=w^I\cdot x + c(x,S)$, where $c(x,S)$ is the Steiner tree cost function for set $S$ given that the edges in $x$ are already bought.
Since a $\log^2(|S|)$-summable, $2$-budget balanced cost sharing method is known for this cost function \cite{roughgarden07, timKLS}, we can conclude:
\begin{corollary}
The correlation gap for Stochastic Steiner tree is bounded by $O(\log^2 n)$, where $n=|V|$, the number of terminals to be connected.
\end{corollary}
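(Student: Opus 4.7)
The plan is to derive this corollary directly from Theorem~\ref{th:MAIN} by exhibiting an appropriate cost-sharing scheme for the cost function of stochastic Steiner tree. Writing $f(x,S) = w^I \cdot x + c(x,S)$, I first observe that the first-stage term $w^I \cdot x$ is a constant in $S$ and appears identically in both the numerator and denominator of the correlation ratio $\kappa$ in \eqref{eq:cordef}. Since the correlation gap is always at least $1$, adding a common nonnegative constant to both the numerator and denominator can only shrink the ratio, so it suffices to bound the correlation gap of $c(x,S)$.

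Next, I would verify the hypotheses of Theorem~\ref{th:MAIN} for $c(x,\cdot)$. Non-decreasingness is immediate: any Steiner tree connecting $S\cup\{i\}$ in the modified graph (where edges of $x$ have zero cost and other edges cost $w^{II}_e$) automatically connects $S$, so $c(x,S) \le c(x, S\cup\{i\})$. Also, $c(x,\emptyset) = 0$. For the cost-sharing scheme I would invoke the Konemann--Leonardi--Sch\"afer--Sviridenko construction cited in \cite{roughgarden07, timKLS}, which provides a cross-monotonic, $2$-budget-balanced, $O(\log^2|S|)$-summable cost-sharing scheme for the Steiner tree cost function on any weighted graph. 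Applying this scheme to the graph whose edge costs are $0$ on $x$ and $w^{II}_e$ elsewhere yields such a scheme for $c(x,\cdot)$, with $\eta = O(\log^2 n)$ and $\beta = 2$. Any conventional $(\eta,\beta)$-scheme satisfies the weaker conditions used in Theorem~\ref{th:MAIN} trivially (specializing the universal ordering to $\sigma_S$), as noted in the paragraph following the definition of weak summability.

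Plugging these constants into Theorem~\ref{th:MAIN} bounds the correlation gap of $c(x,\cdot)$ by $\eta\beta \cdot \tfrac{e}{e-1} = O(\log^2 n)$, and the preceding reduction transfers the bound to $f(x,\cdot)$, proving the corollary. The only potentially subtle point is checking that the cited cost-sharing machinery does indeed carry over verbatim to the edge-cost-modified graph (where edges of $x$ are free), but this is immediate because the construction of \cite{timKLS} is defined for arbitrary nonnegative edge weights and only uses the metric/weight structure as a black box; thus cross-monotonicity, budget-balance, and summability hold in this instance without modification.
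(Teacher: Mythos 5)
Your proposal is correct and follows the same route as the paper: invoke the known cross-monotonic, $2$-budget-balanced, $O(\log^2 |S|)$-summable cost-sharing scheme for the Steiner tree cost function $c(x,\cdot)$ (the conventional scheme trivially satisfies the weaker hypotheses of Theorem~\ref{th:MAIN}), and plug $\eta = O(\log^2 n)$, $\beta = 2$ into the theorem. You also explicitly handle the first-stage constant $w^I\cdot x$ by noting that adding a common nonnegative constant to numerator and denominator of a ratio that is already $\ge 1$ only shrinks it — a small but genuine gap in the paper's one-line justification, since $f(x,\emptyset) = w^I\cdot x$ need not vanish and so $f$ itself cannot admit a finitely budget-balanced cost-sharing scheme.
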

This observation reduces our robust problem to the well-studied (for example see \cite{boosted}) SST problem under known (independent Bernoulli) distribution at the expense of an $O(\log^2 n)$-approximation factor.

\subsection{Welfare Maximization Problem}
Finally, Theorem \ref{th:MAIN} extends some existing results for {\it social welfare maximization} in combinatorial auctions.
Consider the problem of maximizing total utility achieved by partitioning $n$ goods among $K$ players each with utility function $f(S)$ for subset $S$ of goods \footnote{A more general formulation of this problem that is often considered in the literature allows non-identical utility functions for various players.}. The optimal welfare $\OPT$ is obtained by following integer program:
\begin{equation}
\label{eq:LP2}
\begin{array}{llll}
\max_{\alpha} & \sum_{S}{\alpha_S f(S)}\\
\mbox{s.t.} & \sum_{S: i \in S}{\alpha_S} = 1, & \forall i \in V \\
& \sum_{S} {\alpha_S} = K \\
& \alpha_S \in \{0,1\},  & \forall S \subseteq V 
\end{array}
\end{equation}
Observe that on relaxing the integrality constraints on $\alpha$ and scaling it by $1/K$, the above problem reduces to that of finding the worst-case distribution $\alpha^*$ (i.e. one that maximizes expected value $\sum_{S}\alpha_S f(S)$ of function $f$) such that the marginal probability $\sum_{S:i\in S} \alpha_S$ of each element is $1/K$. Therefore:
$$ \OPT \le \Ex_{\alpha^*}[K f(S)] $$
Consequently, the correlation gap bound in Theorem \ref{th:MAIN} leads to the following corollary for welfare maximization problems:

\begin{corollary}
For welfare maximization problems with $n$ goods and $K$ players with identical utility functions $f$, the randomized algorithm that assigns goods independently to each of the $K$ players with probability $1/K$ gives $\frac{1}{\eta\beta} (1-\frac{1}{e})$ approximation to the optimal partition; given that function $f$ is non-decreasing and admits an $(\eta, \beta)$-cost-sharing scheme.
\end{corollary}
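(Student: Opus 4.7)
The plan is to derive the corollary as a direct consequence of Theorem \ref{th:MAIN} combined with the LP observation made just before its statement. First I would invoke that observation: relaxing the integrality constraints of program \eqref{eq:LP2} and rescaling the resulting $\alpha$ by $1/K$ turns it into precisely the worst-case distribution program \eqref{primal} for the instance $(f,V,\{1/K\})$, whose optimum I denote by $\Ex_{\alpha^*}[f(S)]$. Consequently,
\begin{equation*}
\OPT \;\le\; K\cdot \Ex_{\alpha^*}[f(S)].
\end{equation*}

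Next, since $f$ is non-decreasing and admits an $(\eta,\beta)$-cost-sharing scheme, Theorem \ref{th:MAIN} applies to this instance and bounds the correlation gap by $\eta\beta\cdot e/(e-1)$, yielding
\begin{equation*}
\Ex_{\alpha^*}[f(S)] \;\le\; \eta\beta\,\frac{e}{e-1}\cdot \Ex_{{\cal D}^I}[f(S)],
\end{equation*}
where ${\cal D}^I$ is the product distribution on $2^V$ with marginals $1/K$.

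Finally, I would match the right-hand side to the expected welfare of the proposed algorithm. Under the algorithm each good is independently assigned to one of the $K$ players uniformly at random, so player $k$'s bundle $S_k$ has marginal distribution exactly ${\cal D}^I$. Linearity of expectation---which, importantly, does not require the bundles $S_1,\ldots,S_K$ to be jointly independent---gives
\begin{equation*}
\Ex\!\left[\sum_{k=1}^{K} f(S_k)\right] \;=\; K\cdot \Ex_{{\cal D}^I}[f(S)].
\end{equation*}
Chaining with the two inequalities above delivers the advertised $\tfrac{1}{\eta\beta}(1-1/e)$ guarantee. I do not anticipate a serious obstacle here: all of the analytical weight is carried by Theorem \ref{th:MAIN}, and what remains is merely the LP identification and a one-line linearity-of-expectation computation on the marginals of the bundles.
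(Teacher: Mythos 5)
Your proof is correct and follows essentially the same route as the paper: identify the LP relaxation of \eqref{eq:LP2} (after scaling by $1/K$) with the worst-case distribution program, apply Theorem~\ref{th:MAIN}, and observe via linearity of expectation that the randomized assignment achieves expected welfare $K\cdot\Ex_{{\cal D}^I}[f(S)]$. The only thing you make explicit that the paper leaves implicit is the last step---that each bundle $S_k$ has marginal law exactly ${\cal D}^I$ and that joint independence of the bundles is not needed---which is a useful clarification but not a different argument.
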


Since $\eta=1, \beta=1$ for submodular functions, the above result matches the $1-1/e$ approximation factor provided by Vondrak \cite{vondrak08} for this problem in case of identical monotone submodular functions.

The reader may observe that even though approximating the worst case distribution directly provides a matching approximation for the corresponding welfare maximization problem, the converse is not true. In addition to having uniform probabilities $p_i=1/K$, solutions for welfare maximization approximate the integer program \eqref{eq:LP2}, where as the worst case distribution requires solving the corresponding LP relaxation. The latter is a strictly harder problem unless the integrality gap is $0$. A notable example is the above-mentioned case of identical submodular functions. This case was studied by Bhikchandani \cite{bikhchandani97} in context of Walrasian equilibria who conjectured a $0$ integrality gap for this problem implying the existence of Walrasian equilibria. However, in \rmconf{appendix \ref{app:integrality}} \addconf{\cite{tech-report}}, we show a simple counter-example with non-zero integrality gap ($11/12$) for this problem. As a byproduct, this counter-example proves that even for identical submodular valuation functions, Walrasian equilibria may not exist.
\section{Proof of Theorem \ref{th:MAIN}}
\label{sec:proof}
For a problem instance $(f,V,\{p_i\})$ and fixed $x$, use $\LP(f,V,\{p_i\})$ and $\INDEP (f,V,\{p_i\})$ to denote the expected cost of worst-case distribution and independent Bernoulli distribution respectively. In this section, we prove our main technical result that the correlation gap $$\frac{\LP(f,V,\{p_i\})}{\INDEP (f,V,\{p_i\})} \le \eta\beta \frac{e}{(e-1)}$$ 
when $f$ is non-decreasing and admits $(\eta, \beta)$ cost-sharing in $S$. As before, we will abbreviate $f(x,S)$ as $f(S)$ for simplicity. 

The proof is structured as follows. We first focus on special instances of the problem in which all $p_i$'s are equal to $1/K$ for some integer $K$, and the worst case distribution is a ``K-partition-type" distribution. That is, the worst case distribution divides the elements of $V$ into $K$ disjoint sets $\{A_1, \ldots, A_K\}$, and each $A_k$ occurs with probability $1/K$. Observe that for such instances, the expected value under worst case distribution is  $\LP(f,V,\{p_i\})=\frac{1}{K} \sum_k f(A_k)$. In Lemma \ref{th:alphamain}, we show that for such ``nice" instances the correlation gap is bounded by $\eta \beta\frac{e}{e-1}$. Then, we use a ``split" operation to reduce any given instance of our problem to a nice instance such that the reduction can only increase the correlation gap. This will show that the bound $\eta \beta\frac{e}{e-1}$ for nice instances is an upper bound for any instance of the problem, thus concluding the proof of the theorem.

\begin{lemma}\label{th:alphamain}
For instances $(f,V,\{p_i\})$ such that
(a) $f(S)$ is non-decreasing and admits an $(\eta, \beta)$-cost-sharing scheme (b) marginal probabilities $p_i$ are all equal to $1/K$ for some integer $K$, and (c) the worst case distribution is a $K$-partition-type distribution, the correlation gap is bounded as:
\[
\frac{\LP(f,V,\{1/K\})}{\INDEP (f,V,\{1/K\})} \leq \eta\beta \frac{e}{(e-1)}
\]
\end{lemma}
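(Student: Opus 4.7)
The plan is to establish the inequality $\Ex_{{\cal D}^I}[f(S)] \ge \frac{e-1}{e\eta\beta}\LP(f,V,\{1/K\})$; combined with the identity $\LP(f,V,\{1/K\}) = \frac{1}{K}\sum_k f(A_k)$ from the partition hypothesis, this gives the stated correlation-gap bound.

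I first fix an arbitrary ordering $\pi_k = (a_1^k,\ldots,a_{|A_k|}^k)$ on each $A_k$ and, for $i = a_j^k$, define the prefix cost-share $\chi^*_i := \chi(i, P_k(i), \pi_k|_{P_k(i)})$, where $P_k(i) = \{a_1^k,\ldots,a_j^k\}$. Applying the three cost-sharing properties to each prefix of $A_k$ yields the bracket $f(\{a_1^k,\ldots,a_j^k\})/\beta \le \sum_{l=1}^{j}\chi^*_{a_l^k} \le \eta\,f(\{a_1^k,\ldots,a_j^k\})$: the upper bound is weak $\eta$-summability, and the lower bound follows from cross-monotonicity (which gives $\chi^*_{a_l^k} \ge \chi(a_l^k, \{a_1^k,\ldots,a_j^k\}, \pi_k|_{\{a_1^k,\ldots,a_j^k\}})$ for $l \le j$) together with $\beta$-budget-balance on the prefix. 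Next, for the random sample $S \sim {\cal D}^I$, applying weak summability to $S\cap A_k$ under $\pi_k|_{S \cap A_k}$ and invoking cross-monotonicity on the inclusion $S \cap P_k(i) \subseteq P_k(i)$ produces the pointwise inequality $\eta\,f(S \cap A_k) \ge \sum_{i \in S \cap A_k}\chi^*_i$ for every realization of $S$.

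The main obstacle is extracting the $(1-1/e)$ factor without losing an additional factor of $K$: combining the pointwise estimate with the naive monotonicity bound $f(S) \ge f(S \cap A_k)$ gives only $\Ex[f(S)] \ge \LP/(K\eta\beta)$, which is off by a factor of $K/(1-1/e)$. My plan to close this gap is to exploit the Bernoulli independence of the sets $\{S \cap A_k\}_k$. For each $k$ and each $i = a_j^k$, the event $E_i := \{a_l^k \notin S \text{ for all }l < j,\ a_j^k \in S\}$ that $i$ is the first element of $A_k$ in $S$ under $\pi_k$ has probability $(1/K)(1-1/K)^{j-1}$, and on $E_i$ we have $S \cap P_k(i) = \{i\}$, so the summability of $S \cap A_k$ together with cross-monotonicity forces $\eta f(S \cap A_k) \ge \chi(i, \{i\}, \cdot) \ge \chi^*_i$. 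Taking expectations gives $\Ex[\eta f(S \cap A_k)] \ge \sum_j \chi^*_{a_j^k}(1/K)(1-1/K)^{j-1}$, and an Abel-summation rearrangement against the partial-sum lower bound above rewrites the right-hand side in terms of the prefix values $f(\{a_1^k,\ldots,a_j^k\})$; using monotonicity of $f$ along prefixes together with the Bernoulli tail estimate $1-(1-1/K)^K \ge 1-1/e$ then produces the per-component bound, and careful aggregation across the independent $A_k$'s (converting the max bound $f(S)\ge f(S\cap A_k)$ into the right sum-style control via independence) yields the desired $\Ex[f(S)] \ge \frac{e-1}{e\eta\beta}\LP$, and hence correlation gap at most $\eta\beta\cdot\frac{e}{e-1}$.
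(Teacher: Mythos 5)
Your componentwise strategy has a genuine gap at the aggregation step, and I don't see a way to close it as sketched. You plan to prove, for each block $A_k$ separately, a lower bound on $\Ex[\eta\, f(S\cap A_k)]$ and then combine them. But the only relation monotonicity gives between $f(S)$ and the block quantities is $f(S)\ge \max_k f(S\cap A_k)$; there is no way to upgrade this to $f(S)\gtrsim \sum_k f(S\cap A_k)$ (for subadditive $f$, which covers essentially all the examples the lemma is aimed at, the inequality actually runs the other way), and independence does not rescue it: $\Ex[\max_k X_k]$ can be smaller than $\sum_k\Ex[X_k]$ by a factor of $K$ for independent nonnegative $X_k$, e.g.\ deterministic ones. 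A secondary problem is that your within-block Abel estimate keeps only the first-appearing element of $S\cap A_k$: after dropping the nonnegative prefix terms $f(P_j)$ (whose weights $q_j-q_{j+1}$ decrease in $j$ while $f(P_j)$ increases, so they provide no guaranteed help), the surviving coefficient is $q_{|A_k|}=(1/K)(1-1/K)^{|A_k|-1}$, which lands you near $f(A_k)/(eK\beta)$, not $(1-1/e)f(A_k)/(K\beta)$; and since $|A_k|\ne K$ in general, the tail $1-(1-1/K)^{|A_k|}$ you intend to invoke need not be $1-(1-1/K)^K$.

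The paper's proof sidesteps both issues by working globally. Relabel blocks so that $f(A_1)\ge\cdots\ge f(A_K)$, fix one ordering $\sigma$ on all of $V$ putting $A_K$ first and $A_1$ last, and study the single quantity $\phi(V):=\Ex_{S\sim{\cal D}^I}\bigl[\sum_{\ell}\chi(i_\ell,S_\ell,\sigma_{S_\ell})\bigr]$, applying weak $\eta$-summability exactly once to get $\INDEP\ge\phi(V)/\eta$. The core step is the recursion $\phi(V)\ge(1-p)\,\phi(V\setminus A_1)+\frac{p}{\beta}f(A_1)$: split $S$ into $S_{-1}=S\cap(V\setminus A_1)$ and $S_1=S\cap A_1$, push the $S_1$ contribution up by cross-monotonicity to shares in $S_{-1}\cup A_1$, write the $S_{-1}$ contribution as a $(1-p,p)$ convex combination so that one piece can be joined with the $S_1$ piece and $\beta$-budget balance applied on $S_{-1}\cup A_1$, then use monotonicity $f(S_{-1}\cup A_1)\ge f(A_1)$. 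Unrolling gives $\phi(V)\ge\frac{p}{\beta}\sum_{k=1}^{K}(1-p)^{k-1}f(A_k)$, and the $(1-1/e)$ factor is extracted by Chebyshev's sum inequality \emph{across the $K$ blocks} (both $(1-p)^{k-1}$ and $f(A_k)$ are non-increasing in $k$), yielding $\phi(V)\ge\frac{1}{\beta}(1-1/e)\cdot\frac{1}{K}\sum_k f(A_k)=\frac{1}{\beta}(1-1/e)\LP$. So the $(1-1/e)$ is a cross-block phenomenon tied to the geometric weights over $k=1,\dots,K$, not a within-block Bernoulli tail; and the recursion lives entirely on $\phi$, never forcing a comparison of $\Ex[f(S)]$ with $\Ex[f(S\cap A_k)]$---which is precisely the comparison your plan cannot make.
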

\begin{proof}
Let the optimal $K$-partition corresponding to the worst case distribution is $\{A_1,~A_2,~\ldots,A_K\}$. Assume w.l.o.g that $f(A_1) \ge f(A_2) \ge \ldots \ge f(A_K)$. 
Fix an order $\sigma$ on elements of $V$ such that for all $k$, the elements in $A_k$ come before $A_{k-1}$. For every set $S$, let $\sigma_S$ be the restriction of ordering $\sigma$ on set elements of set $S$.
 Let $\chi$ is the $(\eta, \beta)$ cost-sharing scheme for function $f$, as per the assumptions of the lemma. 
Then by weak $\eta$-summability of $\chi$:

\begin{equation}\label{eq:eta1}
\begin{array}{lcl}
\INDEP(f,V,\{1/K\}) & =& \E_{S\subseteq V}[f(S)] \\
\noalign{\medskip}
		    & \geq & \frac{1}{\eta} \ \E_{S\subseteq V} \big[\sum_{l=1}^{|S|} \chi(i_l,S_l, \sigma_{S_l})\big]
\end{array}
\end{equation}
where the expected value is taken over independent distribution.

Denote $\phi(V):= \E_{S\subseteq V} \big[\sum_{l=1}^{|S|} \chi(i_l,S_l,\sigma_{S_l})\big]$. Let $p=1/K$. We will show that
$$ \phi(V) \ge (1-p)\phi(V\backslash A_1) + \frac{1}{\beta} f(A_1)$$
Recursively using this inequality will prove the result.
To prove this inequality, denote $S_{-1}=S \cap (V\backslash A_1)$, $S_1=S \cap A_1$, for any $S \subseteq V$. Since elements in $A_1$ come after the elements in $V\backslash A_1$ in ordering $\sigma_S$, note that for any $\ell\le |S_{-1}|$, $S_{\ell} \subseteq S_{-1}$, and for $\ell >|S_{-1}|$, $i_{\ell} \in S_1$.
\begin{equation}\label{eq:eta2}
\begin{array}{rcl}
\phi(V) & = & \E_{S}\big[\sum_{l=1}^{|S_{-1}|}\chi(i_l,S_l, \sigma_{S_l})\big]\\
\noalign{\medskip}
&  & + \ \E_{S}\big[\sum_{l=|S_{-1}|+1}^{|S|}\chi(i_l,S_l, \sigma_{S_l})\big]
\end{array}
\end{equation}
Since $S_{\ell} \subseteq S \cup A_1$, using cross-monotonicity of $\chi$, the second term above can be bounded as:
\begin{equation}
\label{eq:eta0}
\begin{array}{l}
\E_{S} [\sum_{l=|S_{-1}|+1}^{|S|}\chi(i_l,S_l, \sigma_{S_l})]  \\
\noalign{\medskip}
\ \ge \  \E_{S } [\sum_{l=|S_{-1}|+1}^{|S|}\chi(i_l,S \cup A_1 , \sigma_{S \cup A_1})]
\end{array}
\end{equation}
Because $S_{-1}$ and $S_{1}$ are mutually independent, for any fixed $S_{-1}$, each $i \in A_1$ will have the same conditional probability $p=1/K$ of appearing in $S_1$. Therefore,

\begin{equation}\label{eq:eta3}
\begin{array}{l}
\E_{S} \big[\sum_{l=|S_{-1}|+1}^{|S|}\chi(i_l,S \cup A_1, \sigma_{S  \cup A_1})\big] \\ 
\noalign{\medskip}
= \  \E_{S_{-1}}\big[\E_{S_1}[\sum_{l=|S_{-1}|+1}^{|S|}\chi(i_l,S_{-1} \cup A_1, \sigma_{S_{-1} \cup A_1})|S_{-1}]\big]\\
\noalign{\medskip}
= \  p \ \E_{S_{-1}}[ \sum_{i \in A_1} {\chi(i,S_{-1} \cup A_1, \sigma_{S_{-1} \cup A_1})}]\\
\end{array}
\end{equation}

Again, using independence and cross-monotonicity, analyze the first term in the right hand side of \eqref{eq:eta2},

\begin{equation} \label{eq:eta4}
\begin{array}{l}
\E_{S} [ \sum_{l=1}^{|S_{-1}|}\chi(i_l,S_l, \sigma_{S_l})] \\
\noalign{\medskip}
= \ \E_{S_{-1}}[\sum_{l=1}^{|S_{-1}|} \chi(i_l,S_l, \sigma_{S_l})] \\
\noalign{\medskip}
\geq \ (1-p)\ \E_{S_{-1}}[\sum_{l=1}^{|S_{-1}|} \chi(i_l,S_l, \sigma_{S_l})]\\
\noalign{\medskip}
\ \ \ + \ p \ \E_{S_{-1}}[\sum_{l=1}^{|S_{-1}|} \chi(i_l,S_{-1} \cup A_1, \sigma_{S_{-1} \cup A_1})] \\
\noalign{\medskip}
= \ (1-p) \ \phi(V \backslash A_1) \\
\noalign{\medskip}
\ \ \ + \ p \ \E_{S_{-1} }[\sum_{l=1}^{|S_{-1}|} \chi(i_l,S_{-1} \cup A_1, \sigma_{S_{-1} \cup A_1})]
\end{array}
\end{equation}
Based on \eqref{eq:eta2}, \eqref{eq:eta3} and \eqref{eq:eta4}, and the fact that the cost-sharing scheme $\chi$ is $\beta$-budget balanced, we deduce

\begin{equation}
\begin{array}{rcl}
\phi(V)&=&(1-p) \ \phi(V \backslash A_1) \\ 
\noalign{\medskip}
& & \ + \ p \ \E_{S_{-1}}[\sum_{l=1}^{|S_{-1}|} \chi(i_l,S_{-1} \cup A_1, \sigma_{S_{-1} \cup A_1}) +\\
\noalign{\medskip}
& & \ \ \ \ \ \ \ \ \ \ \ \ \ \ \sum_{i \in A_1}{\chi(i,S_{-1}\cup A_1, \sigma_{S_{-1}\cup A_1})}] \\
\noalign{\medskip}
& \ge & (1-p)\ \phi(V \backslash A_1) + \frac{1}{\beta} p \ \E_{S_{-1}} [f(S_{-1} \cup A_1)] \\
\noalign{\medskip}
&\geq & (1-p)\ \phi(V \backslash A_1)+ \frac{1}{\beta} p \ f(A_1), 
\end{array}
\end{equation}
The last inequality follows from monotonicity of $f$. Expanding the above recursive inequality for $A_2$, $\ldots$, $A_K$, we get
\beq\label{eq:eta5}
\phi(V) \geq \frac{1}{\beta} p \sum_{k=1}^K { (1-p)^{k-1} f(A_k)},
\eeq
Since $f(A_k)$ is decreasing in $k$, and $p=1/K$ by simple arithmetic one can show
$$
\begin{array}{rcl} 
\phi(V) & \ge & \frac{1}{\beta} \cdot \sum_{k=1}^K pf(A_k) \cdot \frac{(\sum_{k=1}^K (1-p)^{k-1})}{K} \\
\noalign{\medskip}
& \ge & \frac{1}{\beta} \cdot (1-\frac{1}{e}) \cdot \sum_{k=1}^K pf(A_k)
\end{array}
$$
By definition of $\phi(V)$, this gives:
$$
\INDEP (f,V,\{1/K\}) \geq \frac{1}{\eta \beta} \left(1-\frac{1}{e}\right) \LP(f,V,\{1/K\}).
$$

\end{proof}
Next, we reduce a general problem instance to an instance satisfying the properties required in Lemma \ref{th:alphamain}.
We use the following split operation.\newline
\paragraph{Split:}
Given a problem instance $(f,V,\{p_i\})$, and integers $\{n_i \ge 1, i\in V\}$, define a new instance $(f', V', \{p'_j\})$ as follows: split each item $i\in V$ into $n_i$ copies $C^i_1,C^i_2,\ldots, C^i_{n_i}$, and assign a marginal probability of $p'_{C^i_k} = \frac{p_i}{n_i}$ to each copy.
Let $V'$ denote the new ground set containing all the duplicates. Define the new cost function  $f':2^{V'}\rightarrow \R$ as:
\beq\label{eq:f'}
 f'(S') = f(\Pi(S')), \mbox{ for all $S' \subseteq V'$ },
\eeq
where $\Pi(S') \subseteq V$ is the original subset of elements whose duplicates appear in $S'$,
i.e. $\Pi(S') = \{i \in V| C^i_k \in S'~ \mbox{for some}~ k \in \{1,2,\ldots,n_i\}\}$.

The split operation has following properties. Their proofs \rmconf{will be given in Appendix \ref{app:split}} \addconf{appear in \cite{tech-report}}.

\begin{property} \label{prop1}
If $f(S)$ is a non-decreasing function in $S$, 
then so is $f'$.
\end{property}
\begin{property}\label{prop2}
If $f(S)$ is non-decreasing in $S$,
then splitting does not change the worst case expected value, that is:
$$
\LP(f,V, \{p_i\})=\LP(f',V',\{p'_j\})
$$
\end{property}
\begin{property}\label{prop3}
If $f(S)$ is non-decreasing in $S$, then splitting can only decrease the expected value over independent distribution:
$$
\INDEP(f,V,\{p_i\}) \ge \INDEP(f',V',\{p'_j\}) .
$$
\end{property}

The remaining proof tries to use these properties of split operation for reducing any given instance to a ``nice" instance so that Lemma \ref{th:alphamain} can be invoked for proving the correlation gap bound. \newline\\
\noindent
{\it Proof of Theorem \ref{th:MAIN}.}
Suppose that the worst case distribution for instance $(f,V,\{p_i\})$  is not a partition-type distribution. Then, split any element $i$ that appears in two different sets. Simultaneously, split the distribution by assigning probability $\alpha_{S'}=\alpha_{\Pi(S')}$ to the each set $S'$ that contains exactly one copy of $i$. Repeat  until the distribution becomes a partition. 
Since each new set in the new distribution contains exactly one copy of $i$, 
by definition of function $f'$, this splitting does not change the expected function value.  
By Property \ref{prop2} of Split operation, the worst case expected values for the two instances (before and after splitting) must be the same, so this partition forms a worst case distribution for the new instance. 
 Then, we further split each element (and simultaneously the distribution) until such that the marginal probability of each new element is $1/K$ for some large enough integer $K$ \footnote{Such an integer $K$ can always be reached assuming $p_i$s are rational.}. 
This reduces the worst case distribution to a partition $A_1, \ldots, A_K$ such that each set $A_k$ has probability $1/K$. Thus, the conditions (b) and (c) of Lemma \ref{th:alphamain} are satisfied by the reduced instance $(f',V', \{p'_i\})$. 

By the properties \ref{prop2}, \ref{prop3} of Split operation, the correlation gap can only becomes larger on splitting. So, we can focus on proving the correlation gap bound for the new instance.
Now, let us consider the remaining condition (a) of Lemma \ref{th:alphamain}. By Property \ref{prop1}, the cost function $f'$ obtained by splitting is non-decreasing. 
Given the original $(\eta,\beta)$ cost-sharing method $\chi$ for $f$, we show that there exists a  cost-sharing method $\chi'$ for the new instance such that $\chi'$ is (1) $\beta$-budget balanced (2) weak $\eta$-summable, and (3) cross monotone in following weaker sense. 
$\chi'$ is cross-monotone for any $S' \subseteq T', \sigma_{S'} \subseteq \sigma_{T'}$ such that $\sigma_{S'},\sigma_{T'}$ respect the partial order $A_{K}, \ldots, A_1$ of elements, and $S'$ is a {\it partial-prefix} of $T'$, that is, for some $k\in \{1, \ldots, K\}$, $S' \subseteq A_K \cup \cdots \cup A_{k}$, and $T'\backslash S' \subseteq  A_{k} \cup \cdots \cup A_1$. The construction of this cost-sharing scheme is given in appendix, Lemma \ref{lem:new-cost-share}.

Thus, all the conditions in Lemma \ref{th:alphamain} are satisfied by the new instance except for the cross-monotonicity. The weaker cross-monotonicity that the new instance satisfies is actually sufficient to prove Lemma \ref{th:alphamain}. To see this, observe that cross monotonicity is used only in Equation \ref{eq:eta0} and \ref{eq:eta4}, and at both of these places, the required prefix condition is satisfied. Thus, Lemma \ref{th:alphamain} can be invoked to bound the correlation gap for the new instance, thereby completing the proof. \qed

\section{Supermodular functions}
In the end, we directly consider the correlation robust model for cost functions $f(x,S)$ which are supermodular in $S$. As shown in Section \ref{sec:cor}, the correlation gap for these cost functions can be exponentially high, so independent distribution does not give a good approximation to the worst case distribution. However, it is easy to characterize the worst case distribution and directly solve the correlation robust model in this case.

\begin{lemma}
\label{th:supermodular}
Given that function $f: 2^V \rightarrow \R$ is supermodular, the worst case distribution over $S$ has the following closed form
$$
\Pr(S) = \left\{ \begin{array}{ll}
	p_n & \mbox{ if } S=S_n\\
	p_i - p_{i+1} & \mbox{ if } S=S_i, 1 \le i \le n-1\\
	1-p_1				& \mbox{ if } S = \emptyset\\
	0 					& o.w.
	\end{array}
	\right.
$$

where $n=|V|$; $i$ is the $i^{th}$ member of $V$ and $S_i$ is the set of first $i$ members of $V$, both with respect to a specific ordering over $V$ such that $p_1\ge \ldots \ge p_n$.
\end{lemma}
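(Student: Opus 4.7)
The plan is an uncrossing argument that exploits supermodularity to show the optimal distribution concentrates on a chain. Fix any optimal distribution $\mathcal{D}^*$ for the LP \eqref{primal}, and suppose its support contains two incomparable sets $A, B$. Let $\epsilon = \min\{\Pr_{\mathcal{D}^*}(A), \Pr_{\mathcal{D}^*}(B)\} > 0$, and form a new distribution $\mathcal{D}'$ by transferring mass $\epsilon$ off each of $A, B$ onto $A\cap B$ and $A\cup B$. Because $\mathbf{1}[i\in A] + \mathbf{1}[i \in B] = \mathbf{1}[i \in A\cap B] + \mathbf{1}[i \in A\cup B]$ for every $i \in V$, all marginals (and the total mass) are preserved. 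Supermodularity of $f$ gives $f(A)+f(B) \le f(A\cap B) + f(A\cup B)$, so $\Ex_{\mathcal{D}'}[f] \ge \Ex_{\mathcal{D}^*}[f]$, and $\mathcal{D}'$ is again optimal.

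To guarantee that an optimum can actually be taken to be chain-supported, I would introduce the potential $\Phi(\mathcal{D}) = \sum_S \Pr_\mathcal{D}(S)\,|S|^2$ and compute
\[
\Phi(\mathcal{D}') - \Phi(\mathcal{D}^*) = 2\epsilon\,(|A|-|A\cap B|)(|B|-|A\cap B|),
\]
which is strictly positive whenever $A,B$ are incomparable. Since the set of optimal distributions is a compact convex polytope and $\Phi$ is continuous, $\Phi$ attains a maximum on that set; by the displayed inequality, any such maximizer must already be supported on a chain, for otherwise one more uncrossing step would strictly increase $\Phi$.

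It then remains to pin down which chain distribution. For any distribution supported on a chain $\emptyset = T_0 \subset T_1 \subset \cdots \subset T_m$, the marginal of any $i \in T_k \setminus T_{k-1}$ equals $\sum_{j\ge k}\Pr(T_j)$, so elements appearing later in the chain must have smaller marginals. Ordering $V$ so that $p_1 \geq p_2 \geq \cdots \geq p_n$ and setting $S_i = \{1,\dots,i\}$, this forces $T_j = S_{i_j}$ for some increasing sequence of indices; inserting any missing $S_i$ with probability $0$ extends to a full chain $\emptyset \subset S_1 \subset \cdots \subset S_n$ without changing the distribution. Solving the resulting triangular system $\sum_{j \geq i} \Pr(S_j) = p_i$ backward from $i=n$ yields $\Pr(S_i) = p_i - p_{i+1}$ (with $p_{n+1} := 0$) and $\Pr(\emptyset) = 1 - p_1$, matching the stated formula.

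The main obstacle is the second step: iterating the uncrossing operation naively need not terminate in finitely many steps (new support sets can be created), so a direct inductive termination argument is awkward. The potential-plus-compactness trick sidesteps this entirely by exhibiting a chain-supported optimum without any reference to an iterative procedure. A minor subtlety is tiebreaking among equal $p_i$; since different admissible orderings assign the same probabilities to the sets of positive mass, the answer is unambiguous.
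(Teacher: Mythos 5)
Your proposal is correct but takes a genuinely different route from the paper. The paper proves optimality by LP duality: it writes the dual of the worst-case LP, observes that for fixed $\gamma$ the inner minimization over $\lambda$ is a linear program over the supermodular polyhedron of $f(\cdot)-\gamma$, evaluates it via Edmonds' greedy algorithm to get the closed form $\sum_i p_i(f(S_i)-f(S_{i-1}))$, optimizes over $\gamma$, and checks that the resulting dual value matches the primal value of the proposed chain distribution, thereby certifying optimality. You instead use a primal uncrossing argument: moving mass $\epsilon$ from incomparable $A,B$ to $A\cap B$ and $A\cup B$ preserves all marginals (since $\mathbf{1}[i\in A]+\mathbf{1}[i\in B]=\mathbf{1}[i\in A\cap B]+\mathbf{1}[i\in A\cup B]$) and by supermodularity does not decrease the objective; the potential $\Phi(\mathcal{D})=\sum_S \Pr_{\mathcal{D}}(S)\,|S|^2$ increases by $2\epsilon\,|A\setminus B|\cdot|B\setminus A|>0$, so a $\Phi$-maximizer over the compact optimal face must be chain-supported, and the triangular marginal system then uniquely determines the probabilities. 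Your computation of the potential increment is correct, and the tiebreaking remark is accurate (equal $p_i$'s yield zero probability on the corresponding $S_i$, so different admissible orderings agree). The trade-off: the paper's argument produces an explicit dual certificate and leans on standard supermodular-polyhedron machinery, whereas yours is more elementary and self-contained, avoiding Edmonds' theorem entirely at the cost of the compactness-plus-potential device to establish existence of a chain-supported optimum.
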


The lemma is simple to prove, a proof appears in \rmconf{appendix \ref{app:supermodular}} \addconf{\cite{tech-report}}.
Lemma \ref{th:supermodular} implies following corollary for solving the robust optimization problem.
\begin{corollarylemma}
\label{cor:supermodular}
For cost functions $f(x,S)$ that are supermodular in $S$ for any feasible $x$, the robust optimization problem is simply formulated as:
$$ \min_{x\in C} p_n f(x,S^n) + \sum_{i=1}^{n-1} (p_i - p_{i+1}) f(x,S^{i}) + (1 - p_1) f(x,\phi)
$$
\end{corollarylemma}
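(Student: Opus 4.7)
My plan is to prove the lemma by two steps: first, verify that the proposed distribution is feasible, and second, show that among all feasible distributions it is optimal via a rearrangement (uncrossing) argument that exploits supermodularity.

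For feasibility, I would check the marginal constraint directly. For each $i$, the element $i$ (with respect to the ordering in which $p_1 \ge \cdots \ge p_n$) belongs to $S_i, S_{i+1}, \ldots, S_n$. Summing the probabilities of these sets telescopes:
\[
\sum_{j=i}^{n-1} (p_j - p_{j+1}) + p_n = p_i,
\]
and the total probability is $(1-p_1) + \sum_{j=1}^{n-1}(p_j - p_{j+1}) + p_n = 1$. Non-negativity is immediate since $p_1 \ge \cdots \ge p_n \ge 0$.

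For optimality, the key observation is the uncrossing identity: for any two sets $S, T$ we have $\mathbf{1}_S + \mathbf{1}_T = \mathbf{1}_{S \cup T} + \mathbf{1}_{S \cap T}$, so replacing mass $\epsilon$ on $\{S,T\}$ with mass $\epsilon$ on $\{S\cup T, S \cap T\}$ preserves all marginals. By supermodularity of $f$, this operation changes the objective by $\epsilon\bigl(f(S\cup T) + f(S\cap T) - f(S) - f(T)\bigr) \ge 0$. Hence starting from any optimal distribution $\mathcal{D}^*$ (which exists because the feasible polytope is compact and the objective is linear), one can uncross all incomparable pairs in the support. To ensure this terminates, I would invoke a potential function argument: either pick an optimal extreme point (basic feasible solution), whose support has at most $n+1$ sets and which turns out to be a chain after finitely many uncrossings, or use a strict monovariant such as $\sum_S \Pr(S)\,|S|^2$, which strictly increases with each nontrivial uncrossing since $|S\cup T|^2 + |S\cap T|^2 > |S|^2 + |T|^2$ whenever $S, T$ are incomparable. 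Either way, I obtain an optimal distribution whose support is a chain $\emptyset \subseteq T_1 \subsetneq T_2 \subsetneq \cdots \subsetneq T_m$.

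Finally, I would show that a chain-supported distribution with the given marginals must coincide with the proposed one. Along a chain, define $\tau(i) := \min\{j : i \in T_j\}$; then $p_i = \sum_{j \ge \tau(i)} \Pr(T_j)$. Since the $T_j$ are nested, $\tau$ is a weakly decreasing function of the rank of $p_i$, i.e., the element with the largest marginal enters the chain first. Relabeling so that the chain respects the sorted order, the chain must (after inserting zero-probability copies if necessary) be exactly $S_1 \subsetneq S_2 \subsetneq \cdots \subsetneq S_n$, and the probabilities are then pinned down by $\Pr(S_i) = p_i - p_{i+1}$ (with the convention $p_{n+1} = 0$) and $\Pr(\emptyset) = 1-p_1$.

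The main obstacle is making the uncrossing argument rigorous: one needs to be careful that the procedure produces an optimum whose support is a chain, not merely that each single step weakly improves the objective. I would handle this either by restricting attention from the outset to an optimal vertex of the marginal polytope (standard LP theory) and arguing that such a vertex must be chain-supported because any incomparable pair could be uncrossed to yield another optimum with a strictly simpler structure, contradicting extremality; or by the monovariant argument sketched above, which terminates in finitely many steps within the finite-dimensional face of optima.
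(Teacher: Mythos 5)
Your approach is correct in substance and is genuinely different from the paper's. The paper proves the corresponding lemma (Lemma~\ref{th:supermodular}) by LP duality: it writes down the dual of the worst-case distribution LP, observes that for fixed $\gamma$ the dual is a linear minimization over the supermodular polyhedron of $f(\cdot)-\gamma$, invokes Edmonds' greedy characterization of that optimum, applies Abel summation, and matches the resulting dual value to the primal value of the proposed chain distribution. Your argument is a purely combinatorial primal one: verify feasibility of the chain distribution, then show every optimum can be uncrossed into a chain, and finally show that a chain with the prescribed marginals is unique (up to ties and zero-mass sets). Your approach is more self-contained (no appeal to Edmonds' theorem), at the cost of having to manage the uncrossing bookkeeping; the paper's approach is shorter once Edmonds is cited. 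The correspondence between the two is the usual one: uncrossing is the combinatorial shadow of the greedy dual solution.

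One caveat about rigor, which you partially anticipate but do not fully close: neither of your two proposed fixes quite works as stated. The ``optimal extreme point'' argument does not directly contradict extremality, because expressing the vertex as a midpoint of two feasible points requires perturbing in both the $+\delta$ and $-\delta$ directions, and $-\delta$ is infeasible unless $\alpha_{S\cup T}$ and $\alpha_{S\cap T}$ are \emph{already} strictly positive. The monovariant $\sum_S \alpha_S\,|S|^2$ does strictly increase by $2\,|S\setminus T|\,|T\setminus S|\cdot\epsilon$ per uncross, but since $\epsilon$ can be arbitrarily small, this alone does not prove termination in finitely many steps. The clean fix is an existence argument rather than a process argument: the set of optimal distributions is a nonempty compact face of the feasibility polytope, the monovariant is continuous and therefore attains its maximum $\alpha^*$ on that face, and if $\alpha^*$ had two incomparable sets in its support then uncrossing would produce another optimum with strictly larger monovariant, a contradiction. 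Hence some optimum is chain-supported, and your final uniqueness step (pinning down the entry order $\tau(i)$ by the sorted marginals, with ties handled because $p_i=p_{i+1}$ forces $\Pr(S_i)=0$) then identifies it with the proposed distribution.
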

Thus, if $f(x,S)$ is convex in $x$ and $C$ is a convex set, then it is a convex optimization problem and can be solved efficiently. \newline
\\
\paragraph{Acknowledgements} The authors would like to thank Ashish Goel and Mukund Sundarajan for many useful insights on the problem.

\bibliographystyle{abbrv}
\bibliography{soda}

\appendix

\rmconf{
\section{Maximum of Poisson Random Variables}
\label{app:poisson}
In this section, we show that the expected value of the maximum of a set of $M$ independent identically distributed poisson random variables can be bounded as $O(\log M/\log \log M)$ for large $M$.

Let $\lambda$ denote the mean, and $F$ denote the distribution of i.i.d. poisson variables $X_i$. Define $G=1-F$. Also define continuous extension of $G$:
$$G_c(x)=\exp(-\lambda) \sum_{j=1}^{\infty} \lambda^{(x+j)}/\Gamma(x+j+1)$$
Note that $G(k)=G_c(k)$ for any non-negative integer $k$. 
Let $\{A_k\}_{k=1}^{\infty}$ is defined by $G_c(A_k) = 1/k$. Define continuous function $L(x) = \log(x)/\log\log(x)$. Then, in \cite{kimber83}, it is shown that for large $k$, $A_k \sim L(k)$. 

We use these asymptotic results to derive a bound on expectation of $Z=\max_{i=1, \ldots, M} X_i$ for large $M$. 
\begin{eqnarray} 
\label{expected-max}
\Ex[Z] & = & \sum_{k=0}^{\infty} \Pr(Z>k) \nonumber \\
	& = & \sum_{k=0}^{\lceil L(M^2) \rceil} \Pr(Z>k) + \sum_{k=\lceil L({M^2}) \rceil+1}^{\infty} \Pr(Z>k) \nonumber\\
& \le & L({M^2})+1 + \displaystyle\int_{x=L({M^2})}^{\infty} \Pr(Z>x)dx
\end{eqnarray}
Next, we show that the integral term on the right hand side is bounded by a constant for large $M$. Substituting $x=L(y)$ in the integration on the right hand side, we get
\begin{eqnarray*} 
& & \displaystyle\int_{x=L({M^2})}^{\infty} \Pr(Z>x)dx \\
& = & \displaystyle\int_{L(y)=L(M^2)}^{\infty} \Pr(Z>L(y)) L'(y) dy\\
& \le &  \displaystyle\int_{y=M^2}^{\infty} \Pr(Z>L(y))\frac{1}{y} dy
\end{eqnarray*}
$L'(y)$ denotes the derivative of function $L(y)$. The last step follows because $L'(y) \le \frac{1}{y}$ for large enough $y$ (i.e. if $\log\log y\ge 1$).  
Further, since $\frac{\Pr(Z>L(k))}{k}$ is a decreasing function in $k$, it follows that:
\begin{eqnarray*} 
\displaystyle\int_{y=M^2}^{\infty} \frac{\Pr(Z>L(y))}{y} dy & \le & \sum_{k=M^2}^{\infty} \frac{\Pr(Z>L(k))}{k}\\
\end{eqnarray*}
Now, for large $k$, $L(k) \sim A_k$, and
$$ \Pr(Z>A_k) \le 1-(1-G_c(A_k))^M = 1-\left(1-\frac{1}{k}\right)^M$$
Therefore, for large $M$, 
\begin{eqnarray*} 
\sum_{k=M^2}^{\infty} \frac{\Pr(Z>L(k))}{k} & \le & \sum_{k=M^2}^{\infty} \frac{1}{k}-\frac{1}{k}\left(1-\frac{1}{k}\right)^M \\
		& \le &   \sum_{k=M^2}^{\infty} \frac{2M}{k^2}\\
				& \le & 1
\end{eqnarray*}
This proves that the integral term on the right hand side of \eqref{expected-max} is bounded by a constant, and thus, for large $M$:
$$\Ex[Z] \le L(M^2)+ 2 = O(\log M/\log \log M)$$.
\section{Properties of Split Operation}
\label{app:split}
\paragraph{Property \ref{prop1}}
If $f(S)$ is non-decreasing in $S$ with an $(\eta, \beta)$-cost sharing scheme, then so is $f'$.
\begin{proof}
Monotonicity holds since for any $S'\subseteq T'\subseteq V'$, $\Pi(S') \subseteq \Pi(T')$:
$$f'(S') = f(\Pi(S')) \le f(\Pi(T')) = f'(T')$$
\end{proof}
\paragraph{Property \ref{prop2}}
If the cost function $f(\cdot)$ is non-decreasing in $S$,
then the splitting procedure does not change the worst-case expected value. That is:
$$
\LP(f,V, \{p_i\})=\LP(f',V',\{p'_j\})
$$

\begin{proof}
For any fixed $x$, the worst case expected cost is the optimal value of following linear program, where $\{\alpha_S\}_{S\subseteq V}$ represents a distribution over subsets of set $V$:
\begin{equation}\label{eq:LPprop2}
\begin{array}{lrl}
\LP(f,V,\{p_i\}) =& \max_{\alpha} & \sum_{S}{\alpha_S f(x,S)}\\
& \mbox{s.t.} & \sum_{S:~i \in S}{\alpha_S} = p_i,~\forall i \in V \\
& & \sum_{S} {\alpha_S}=1 \\
& & \alpha_S \geq 0,~ \forall S \subseteq V. \\
\end{array}
\end{equation}
Suppose item $1$ is split into $n_1$ pieces, and each piece is assigned a probability $\frac{p_1}{n_1}$. Let $\{\alpha_S\}$ denote the optimal solution for the instance $(f,V,\{p_i\})$, then we can construct a solution for the new instance $(f',V', \{p'_j\})$ which has the same objective value by assigning non-zero probabilities to only those sets which have no duplicates.
$$
\begin{array}{l}
\forall S' \subseteq V', ~~\\
\alpha'_{S'} = \left\{ \begin{array}{rl}
\alpha_{S'}, & \mbox{if $S'$ contains no copies of item $1$} \\
\frac{p_1}{n_1} \alpha_{S'}, & \mbox{if $S'$ contains exactly one copy} \\
0,            & \mbox{otherwise}
\end{array} \right.
\end{array}
$$
One can verify that $\{\alpha'_{S'}\}$ is a feasible distribution (i.e., feasible to the linear program \eqref{eq:LPprop2}) for the new instance $(f',V', \{p'_j\})$, and has the same objective value as $\LP(f,V,\{p_i\})$. Hence, $\LP(f,V, \{p_i\}) \leq \LP(f',V',\{p'_j\})$.

For the other direction, consider an optimal solution $\{\alpha'_{S'}\}$ of the new instance. It is easy to see that there exists an optimal solution $\{\alpha'_{S'}\}$ that $\alpha'_{S'}=0$ for all $S'$ that contain more than one copy of item $1$. To see this, assume for contradiction that some set with non-zero probability has two copies of item 1. By definition of $f'$, removing one copy will not decrease the function value. 
Then, because of monotonicity of $f'$, we can move out one copy to another set $T$ that has no copy of  item $1$. Such $T$ always exists since the probabilities of copies of item $1$ must sum up to $p_1 \le 1$. So, we can  assume that in the optimal solution $\alpha'_{S'}=0$ for any set $S'$ containing more than one copy. Thus, we can set $\alpha_S = \alpha'_{S'}$ where $S$ is the corresponding original set for any $S \subseteq V$. That forms a feasible solution for original instance with same objective value as $\LP(f',V',\{p'_j\})$.  We can apply the argument recursively for all the items to prove the lemma.
\end{proof}

Next, we prove that the expected cost under independent Bernoulli distribution can only decrease by the split operation.

\paragraph{Property \ref{prop3}}
If $f(\cdot)$ is non-decreasing, then after splitting
$$
\INDEP(f',V',\{p'_j\}) \leq \INDEP(f,V,\{p_i\}).
$$

\begin{proof}
Let $(f',V',\{p'_j\})$ denote the new instance by splitting item $1$ into $n_1$ pieces. Denote
\[
\Lambda :=\{ S' \subseteq V'| S'~\mbox{contains at least one copy of }1\},
\]
and denote $\pi=\Pr(S' \in \Lambda)$. Consider the expected cost under independent Bernoulli distribution, by independence,
$$
\begin{array}{rcl}
& & \INDEP(f',V',\{p'_j\})\\
&=& \Ex_{S'} \left[f'(S')\ I(S' \in \Lambda)\right] + \Ex_{S'}\left[f'(S')\ I(S' \notin \Lambda)\right]\\
               &=&  \pi \ \Ex_{S \subseteq V \backslash \{1\} } [f(S \cup \{1\})]\\
               & & \ \ + \ (1-\pi )\ \Ex_{S \subseteq V \backslash \{1\} }[ f(S)] \\
               &\leq & p_1 \ \Ex_{S \subseteq V \backslash \{1\} } [f(S \cup \{1\})]\\
               & & \ \ +\ (1-p_1 )\ \Ex_{S \subseteq V \backslash \{1\} }[ f(S)]\\
               &=& \INDEP(f,V,\{p_i\}).
               \end{array}
$$
The second last inequality holds because $\pi = 1-(1-\frac{p_1}{n_1})^{n_1} \leq p_1$, and $f(S)\leq f(S\cup \{1\})$ by monotonicity.
\end{proof}

\section{$\frac{11}{12}$ Integrality gap for SWM with identical submodular valuations}
\label{app:integrality}
Let $V =\{1,2,3,4,5,6\}$, $K=3$, and construct a monotone submodular value function as
$$
f(S)=\left\{ \begin{array}{ll}
0 & \mbox{if}~S=\emptyset\\
2 & \mbox{if}~|S|=1\\
3 & \mbox{if}~|S \cap \{1,2,3\}|=1 ~\mbox{and}~ |S \cap \{4,5,6\} |=1 \\
4 & \mbox{if}~|S \cap \{1,2,3\}|\ge 2~ \mbox{or}~ |S \cap\{4,5,6\}| \ge 2 \\
\end{array}
\right.
$$
Then the optimal fractional solution to the LP relaxation of \eqref{eq:LP2} is given by $$\alpha_{\{1,2\}}=\alpha_{\{2,3\}}=\alpha_{\{1,3\}}=0.5,~~\alpha_{\{4,5\}}=\alpha_{\{5,6\}}=\alpha_{\{4,6\}}=0.5,$$
with an optimal value $12$; but the optimal integer solution will have an optimal value $11$. So there is an $11/12$ integrality gap.
}

\section{Construction of cost-sharing scheme}

\begin{lemma}
\label{lem:new-cost-share}
Given $(\eta,\beta)$ cost-sharing scheme $\chi$ for $(f,V,\{p_i\})$, there exists a cost-sharing scheme $\chi'$ for instance $(f',V',\{p'_i\})$ constructed by splitting in Section \ref{sec:proof}, such that $\chi'$ is 
(a) $\beta$-budget balanced (b) weak $\eta$-summable, and (c) cross monotone for any $S' \subseteq T'$, $\sigma_{S'} \subseteq \sigma_{T'}$ such that $S'$ is a {\it partial prefix} of $T'$.
\end{lemma}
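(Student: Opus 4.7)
The plan is to build $\chi'$ by projecting through $\Pi\colon V'\to V$ and concentrating each cost share on the first copy. Given $S'\subseteq V'$ with ordering $\sigma_{S'}$, let $\tau_{S'}(i)\in S'$ denote the earliest copy of $i\in\Pi(S')$ in $\sigma_{S'}$, and let $\sigma_{\Pi(S')}$ be the ordering on $\Pi(S')$ induced by the positions of the $\tau_{S'}(i)$ in $\sigma_{S'}$. Set
\[
\chi'(i',S',\sigma_{S'})=\begin{cases}\chi(\Pi(i'),\Pi(S'),\sigma_{\Pi(S')})&\text{if }i'=\tau_{S'}(\Pi(i')),\\ 0&\text{otherwise.}\end{cases}
\]

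Properties (a) and (b) would then follow from the corresponding properties of $\chi$ on $\Pi(S')$. For budget balance, only first copies contribute, so $\sum_{i'\in S'}\chi'(i',S',\sigma_{S'})=\sum_{i\in\Pi(S')}\chi(i,\Pi(S'),\sigma_{\Pi(S')})$, and this lies in $[f'(S')/\beta,\,f'(S')]$ since $f'(S')=f(\Pi(S'))$ and $\chi$ is $\beta$-budget balanced. For weak $\eta$-summability, along the prefix chain $S'_1\subsetneq\cdots\subsetneq S'$, the set $\Pi(S'_\ell)$ grows by exactly one element precisely at the indices $\ell$ where $i'_\ell$ is a first copy, and those first copies arrive in the ordering $\sigma_{\Pi(S')}$. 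So the prefix-chain sum for $\chi'$ equals the corresponding prefix-chain sum for $\chi$ on $(\Pi(S'),\sigma_{\Pi(S')})$, which is bounded by $\eta f(\Pi(S'))=\eta f'(S')$.

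Property (c) is the real content, and it rests on the following structural fact, which I would first verify directly from the two-stage split: each partition block $A_k$ contains at most one copy of each original element of $V$. Stage~1 makes the support disjoint, so every support set carries at most one copy per original; Stage~2 only further subdivides copies inside these sets and never groups two copies of a common original into the same block. Granting this, fix $S'\subseteq T'$ with $\sigma_{S'}\subseteq\sigma_{T'}$ both respecting $A_K,\ldots,A_1$, together with the partial-prefix pivot $k$. For any $j\in\Pi(S')$ with first copy $j'\in A_{k_j}$ in $\sigma_{S'}$ we have $k_j\ge k$. Any other copy of $j$ lying in $T'\setminus S'\subseteq A_k\cup\cdots\cup A_1$ sits in some $A_{k'}$ with $k'\le k$; since each block holds at most one copy of $j$, we get $k'\ne k_j$ and hence $k'<k_j$, so that copy is strictly later than $j'$ in $\sigma_{T'}$. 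Thus $j'=\tau_{T'}(j)$ as well, and $\sigma_{\Pi(S')}$ is exactly the restriction of $\sigma_{\Pi(T')}$ to $\Pi(S')$. If $i'$ is not a first copy in $\sigma_{S'}$, the preceding copy already lies in $S'\subseteq T'$, so both sides are zero; otherwise, cross-monotonicity of the original $\chi$ yields $\chi(\Pi(i'),\Pi(S'),\sigma_{\Pi(S')})\ge\chi(\Pi(i'),\Pi(T'),\sigma_{\Pi(T')})$, which is exactly the partial-prefix cross-monotonicity demanded of $\chi'$.

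The main obstacle is the first-copy preservation step in property (c). Without the one-copy-per-block fact, a copy of $j$ sitting in $T'\setminus S'$ could share a block with $j'$ and precede it in $\sigma_{T'}$, permuting $\sigma_{\Pi(T')}$ relative to $\sigma_{\Pi(S')}$ and blocking the direct invocation of cross-monotonicity of $\chi$. So the crux of the argument is showing that the splitting procedure genuinely produces the required uniqueness inside each block; once that is settled, everything else is a bookkeeping reduction to the cost-sharing scheme $\chi$ on the original ground set.
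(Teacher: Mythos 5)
Your construction and argument match the paper's proof: you build $\chi'$ by concentrating each cost share on the earliest copy, reduce budget balance and weak summability to the corresponding properties of $\chi$ on $\Pi(S')$ via the observation that the $\chi'$-prefix-chain sum coincides with the $\chi$-prefix-chain sum on first copies, and establish the partial-prefix cross-monotonicity by the structural fact that every block $A_k$ of the split partition contains at most one copy of each original element, so that first copies and the induced ordering are preserved when passing from $S'$ to $T'$. This is precisely the paper's route; your treatment of why the one-copy-per-block fact holds (disjoint supports after Stage~1, block-preserving subdivision in Stage~2) is slightly more explicit than the paper's one-sentence observation, but the underlying reasoning is the same.
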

\begin{proof}
Given cost-sharing scheme $\chi$, construct $\chi'$ as follows: Cost-share $\chi'$ coincides with the original scheme $\chi$ for the sets without duplicates, but for a set with duplicates, it assigns the cost-share solely to the copy with smallest index (as per the input ordering). 
That is, any $S' \subseteq V'$, ordering $\sigma'_{S'}$, and item $C^i_j$ ($j$-th copy of item $i$) in $S'$, allocate cost-shares as follows:
\begin{equation}
\chi'(C^i_j, S', \sigma'_{S'})=\left\{\begin{array}{ll}
\chi(i, S, \sigma_S), & j=\min\{h:~C^i_h \in S'\}, \\
0, & \mbox{o.w.}
\end{array}\right.
\end{equation}
where $S=\Pi(S')$, $\sigma_S$ is the ordering of lowest index copies in $\sigma'_{S'}$, and $min$ is taken with respect to the ordering $\sigma'_{S'}$.
It is easy to see that the property of {\it $\beta$-budget-balance} carries through to the new cost sharing scheme. 
{\it Weak $\eta$-summability} holds since
 $$
 \begin{array}{ll}
 \displaystyle\sum_{\ell=1}^{|S'|}\chi'(i'_{\ell},S'_{\ell}, \sigma_{{S'}_{\ell}}) = \displaystyle\sum_{j=1}^{|S|}\chi(i_j, S_j, \sigma_{S_j}) & \le \eta f(S) \\
 & = \eta f'(S')
 \end{array}$$
where $S=\Pi(S')$, $\sigma_S$ is the ordering of lowest index copies in $\sigma'_{S'}$.

For cross-monotonicity, consider $S' \subseteq T', \sigma_{S'} \subseteq \sigma_{T'}$ such that $S'$ is a ``partial prefix" of $T'$. Now, for any $i'\in S'$, if $i'$ is not a lowest indexed copy in $T'$, then $\chi(i',T', \sigma'_{T'})=0$, so that the condition is automatically satisfied. Let $i'$ is one of the lowest indexed copies in $T'$, then it must have been a lowest indexed copy in $S'$, since $S'$ is a subset of $T'$, and $\sigma_{S'} \subseteq \sigma_{T'}$. 
Thus,
$$
\chi(i',T',\sigma'_{T'}) = \chi(i, T, \sigma_T) \le  \chi(i, S, \sigma_S) =  \chi(i',S', \sigma'_{S'})
$$
where $S = \Pi(S'), T= \Pi(T')$, $\sigma_S, \sigma_T$ are the orderings of lowest indexed copies in $S', T'$ respectively. 
Note that the inequality in above uses cross-monotonicity of $\chi$, which is satisfied only if in addition to $S\subseteq T$, we have that $\sigma_S \subseteq \sigma_T$. That is, if the ordering of elements of $S$ is same in $\sigma_S$ and $\sigma_T$. We show that this is true given the assumption that $\sigma_{S'}, \sigma_{T'}$ respect the partial ordering $A_K, \ldots, A_1$, and $S'$ is a ``partial prefix" of $T'$. That is, $S' \subseteq A_K \cup \cdots \cup A_k$, and $T'\backslash S' \subseteq A_k \cup \cdots \cup A_1$ for some $k$. 
\addconf{\balance}
To see this, observe that the splitting was performed in a manner so that atmost one copy of any element appears in each $A_k$. So, among the newly added copies $T'\backslash S'$, any copy of an element of $S$ can occur only in $T' \cap A_{k+1}$ or later. Since $S' \subseteq A_1 \cup \cdots \cup A_k$, this means that for any element $i\in S$, the newly added copies occur only later in the ordering and they cannot alter the order of lowest indexed copies of elements of $S$. This proves that $\sigma_S \subseteq \sigma_T$.

\end{proof}

\rmconf{
\section{Proof of Lemma \ref{th:supermodular}}
\label{app:supermodular}
For any fixed $x$, the worst case expected cost is the optimal value of following linear program, where $\{\alpha_S\}_{S\subseteq V}$ represents a distribution over subsets of set $V$:

\begin{equation}\label{eq:worstLP}
\begin{array}{lrl}
\LP(f,V,\{p_i\}) =& \max_{\alpha} & \sum_{S}{\alpha_S f(x,S)}\\
& \mbox{s.t.} & \sum_{S:~i \in S}{\alpha_S} = p_i,~\forall i \in V \\
& & \sum_{S} {\alpha_S}=1 \\
& & \alpha_S \geq 0,~ \forall S \subseteq V. \\
\end{array}
\end{equation}
It is easy to verify that
$$
\alpha^* = \left\{ \begin{array}{ll}
									p_n & \mbox{ if } S=S_n\\
									(p_i - p_{i+1}) & \mbox{ if } S=S_i, 1 \le i \le n-1\\
									1-p_1				& \mbox{ if } S = \emptyset\\
									0 					& o.w.
									\end{array}
									\right.
									$$
is a feasible solution to \eqref{eq:worstLP}. Next we show that it is actually the optimal solution. The dual of linear program \eqref{eq:worstLP} is:
\begin{equation}
\label{dual-supermodular}
\begin{array}{ll} \min_{\gamma, \lambda} & \gamma + {p^T \lambda}   \\
\mbox{s.t.}  & f(S) - \sum_{i \in S}{\lambda_i}  \leq \gamma,~ \forall S.
\end{array}
\end{equation}
Consider the problem in $\lambda$ for a given value of $\gamma$. This problem is to minimize a linear function with positive coefficients ($p_i$) over the supermodular polyhedron (of supermodular function $f(S) - \gamma$). Minimizing a linear function over a supermodular polyhedron can be solved by a greedy procedure \cite{edmonds70}, with the optimal value given by $\sum_{i=1}^n p_i (f(S_i) - f(S_{i-1}))$.
Then \eqref{dual-supermodular} can be rewritten as
$$
\begin{array}{ll}
\min_{\gamma} & \gamma +  p_n f(S^n) + \displaystyle\sum_{i=1}^{n-1} (p_i - p_{i+1}) f(S^{i}) - p_1 f(\emptyset) \\
\mbox{s.t.}  & f(\emptyset) \leq \gamma. \\
\end{array}
$$
The optimal solution for above is $\gamma=f(\emptyset)$, therefore optimal value:
$$
\begin{array}{l}
p_n f(S^n) + \sum_{i=1}^{n-1} (p_i - p_{i+1}) f(S^{i}) + (1 - p_1)  f(\emptyset) \\
= \sum_S \alpha^*_S f(S)
\end{array}
$$
This proves the lemma.
}

\end{document}